\title{Logical Approaches to Non-deterministic Polynomial Time over Semirings}
\author{Timon {Barlag}}{Leibniz University Hannover, Germany
\and  }{barlag@thi.uni-hannover.de}{https://orcid.org/0000-0001-6139-5219}{}
\author{Nicolas {Fröhlich}}{Leibniz University Hannover, Germany}{nicolas.froehlich@thi.uni-hannover.de}{https://orcid.org/0009-0003-5413-1823}{Funded by the German research foundation (DFG), project ME 4279/3-1 (project id 511769688)}
\author{Teemu {Hankala}}{University of Helsinki, Finland  }{teemu.hankala@helsinki.fi}{https://orcid.org/0009-0009-5535-5718}{}
\author{Miika {Hannula}}{University of Tartu, Estonia \and University of Helsinki, Finland }{miika.hannula@ut.ee}{https://orcid.org/0000-0002-9637-6664}{Partially supported by the ERC
grant 101020762.}
\author{Minna {Hirvonen}}{Leibniz University Hannover, Germany
 \and University of Helsinki, Finland }{minna.hirvonen@thi.uni-hannover.de}{https://orcid.org/0000-0002-2701-9620}{Funded by the Magnus Ehrnrooth foundation.}
\author{Vivian {Holzapfel}}{Leibniz University Hannover, Germany
 }{holzapfel@thi.uni-hannover.de}{https://orcid.org/0009-0001-1439-5037}{}
\author{Juha {Kontinen}}{University of Helsinki, Finland }{juha.kontinen@helsinki.fi}{https://orcid.org/0000-0003-0115-5154}{}
\author{Arne {Meier}}{Leibniz University Hannover, Germany
}{meier@thi.uni-hannover.de}{https://orcid.org/0000-0002-8061-5376}{Partially supported by the German research foundation (DFG), project ME 4279/3-1 (project id 511769688)}
\author{Laura {Strieker}}{Leibniz University Hannover, Germany
}{strieker@thi.uni-hannover.de}{https://orcid.org/0009-0005-4878-4953}{}
\authorrunning{T. Barlag et al.}
\newcommand{\lit}{\mathrm{Lit}}
\newcommand{\logicFont}[1]{\mathrm{#1}}
\newcommand{\FOK}{{\logicFont{FO}}_{K}^{=}}
\newcommand{\FO}{{\logicFont{FO}}}
\newcommand{\AP}{\mathrm{Prop}}
\newcommand{\PLK}{\mathrm{PL}_{K}^{=}}
\newcommand{\ESOK}{\mathsf{ESO}_K^{=}}
\newcommand{\bnot}{\lnot_{\mathbb{B}}}
\newcommand{\bor}{\lor_{\mathbb{B}}}
\newcommand{\band}{\land_{\mathbb{B}}}
\newcommand{\bto}{\to_{\mathbb{B}}}
\newcommand{\R}{\mathbb{R}}
\newcommand{\N}{\mathbb{N}}
\newcommand{\Z}{\mathbb{Z}}
\newcommand{\ar}{\textnormal{ar}}
\newcommand{\ddfn}{\Coloneqq}
\newcommand{\dfn}{\coloneqq}
\newcommand{\leqm}{\leq_{\rm m}}
\newcommand{\leqpm}{\leqm^{\PTIME}}
\newcommand{\BSSK}{\mathrm{BSS}_K}
\newcommand{\fv}{\mathrm{FV}}
\newcommand{\Var}{\mathrm{Var}}
\newcommand{\MC}{\text-\mathrm{MC}}
\newcommand{\FOKMC}[2][O]{\FOK(#1)\MC\def\temp{#2}\ifx\temp\empty\else_#2\fi}
\newcommand{\SATK}{\mathrm{SAT}_{K}^{=}}
\newcommand{\SATN}{\mathrm{SAT}_{\N}^{=}}
\newcommand{\SATKflat}{\mathrm{SAT}_{K}^\mathrm{flat}}
\newcommand{\evaluate}[2]{\llbracket#1\rrbracket_{#2}}
\newcommand{\PTIME}{\mathsf{P}}
\newcommand{\PTIMEK}{\mathsf{P}_K}
\newcommand{\NP}{\mathsf{NP}}
\newcommand{\NPK}{\mathsf{NP}_K}
\newcommand{\NPKX}[1]{\NPK[#1]}
\newcommand{\NPKXX}{\NPKX{X}}
\newcommand{\NPR}{\mathsf{NP}_{\mathbb{R}}}
\newcommand{\ETK}{\mathsf{ETK}}
\newcommand{\ETKX}[1]{\ETK[#1]}
\newcommand{\ETKXK}{\ETKX{K}}
\newcommand{\ETKXX}{\ETKX{X}}
\newcommand{\EK}{\exists K}
\newcommand{\ETR}{\mathsf{ETR}}
\newcommand{\leqpmk}{\leqm^{\PTIMEK}}
\newcommand{\leqpmkx}[1]{\leqm^{\PTIMEK[#1]}}
\newcommand{\leqpmkxx}{\leqpmkx{X}}
\newcommand{\BP}[1]{\operatorname{BP}(#1)}
\newcommand{\ourpar}[1]{%
  \par\vspace{.25\baselineskip}%
  \noindent\textbf{\sffamily #1.}%
}
\begin{document}

\maketitle

\begin{abstract}
 We provide a logical characterization  of non-deterministic polynomial time defined by BSS machines over semirings via existential second-order logic interpreted in the semiring semantics developed by Gr\"adel and Tannen. 
Furthermore, we show that, similarly to the classical setting, the satisfiability problem of propositional logic in the semiring semantics is the canonical complete problem for this version of NP. 
Eventually, we prove that the true existential first-order theory of the semiring is a complete problem for the so-called Boolean part of this version of NP.

\ccsdesc[500]{Theory of computation~Abstract machines}
\ccsdesc[500]{Theory of computation~Turing machines}
\ccsdesc[500]{Theory of computation~Complexity classes}
\ccsdesc[300]{Theory of computation~Problems, reductions and completeness}
\ccsdesc[300]{Theory of computation~Complexity theory and logic}

\keywords{Semiring,
ESO, SAT, BSS machines, computational complexity, descriptive complexity} 
\end{abstract}

\section{Introduction}

Since the 1960s, many areas of computer science have undergone algebraic generalizations, driving advances in algorithms, automata theory, optimization, and machine learning, while also shedding new light on the classical complexity of fundamental problems such as matrix multiplication \cite{MR248973,S0097539793243004,
EiterK23,
10.5555/639508.639512}. 
In particular, \emph{semirings} have found numerous applications in computer science and AI due to their versatility and modularity in modelling and analysing computational problems
(see, e.g., \cite{DBLP:journals/fuin/RudeanuV04,DBLP:journals/ijac/GaubertK06}).
Classical decision problems are typically framed within the Boolean semiring ($\mathbb{B}$). Counting problems, by contrast, are naturally modelled in the semiring of natural numbers ($\mathbb{N}$), whereas problems involving continuous or geometric aspects are captured by the semiring of non-negative reals ($\mathbb{R}_{\geq 0}$). 
More recently, research combining logic and algebraic semantics has begun to emerge, e.g.,  \cite{GM95,DROSTE200769,10.1145/3643027,GradelM21,HannulaKBV20,BarlagHKPV23}.

Several computational models can be generalized to operate over semirings. For instance, weighted automata and weighted Turing machines assign semiring elements to transitions, representing quantities such as probabilities, costs, or capacities. By varying the underlying semiring, these models capture a wide range of quantitative behaviours, with applications in probabilistic reasoning, optimization, and verification (e.g., \cite{KOSTOLANY,EiterK23}). The objects recognized by such machines are formal power series $r\colon \Sigma^* \to K$, defined over a finite alphabet $\Sigma$ and a semiring $K$.
Conversely, languages over the reals---and more generally over rings---have been investigated through register machines, which operate on real numbers instead of Booleans and perform real-valued functions in a single computational step. 
The well-known Blum-Shub-Smale (BSS) machine model, along with the closely related real RAM model, originates in \cite{DBLP:conf/focs/BlumSS88,PreparataS85}. 
Moreover, the BSS model extends naturally to semirings and arbitrary first-order structures \cite{Poi95,IC2005,Jelia25}.

BSS machines can be also used to  decide Boolean languages. Indeed, the \emph{Boolean part} of NP over the reals ($\NPR$) (i.\,e., with machine constants $0$ and $1$, and inputs restricted to $\{0,1\}^*$) coincides with the class $\exists \mathbb{R}$, consisting of all decision problems that are polynomial-time reducible to the existential first-order theory of the reals--equivalently, to the non-emptiness of semialgebraic sets \cite{SchaeferS17}. The class $\exists \mathbb{R}$ has attracted considerable attention in recent years, as it captures the complexity of a wide range of geometric problems (see the recent survey \cite{2407-18006}). 
 The classes $\NPR$ and $\exists \mathbb{R}$ have been logically characterized by existential second-order logic in the metafinite model theory context  \cite{GM95}.

During the last decade, the use of semirings to study \emph{provenance} for database query languages has attracted increasing amounts of attention~\cite{%
GreenKT07}.
 Semiring provenance is an approach to query evaluation in which the result of a query is something more than just a mere one-bit true/false answer. 
 The basic idea behind this approach is to annotate the atomic facts in  database tables by values from some semiring $K$ ($K$-relation), and to propagate these values through a query.
 Depending on the choice of the semiring, the provenance valuation gives information about a query, e.g., regarding its confidence, cost, or the number of assignments that make the query true.
 Semiring semantics for query languages is currently an actively studied topic in database theory (see, e.g.,
  \cite{10.1145/3651146,im_et_al:LIPIcs.ICDT.2024.11} for Datalog queries and  \cite{eldar_et_al:LIPIcs.ICDT.2024.4,munozserrano_et_al:LIPIcs.ICDT.2024.12,tencate_et_al:LIPIcs.ICDT.2024.8} for conjunctive queries). It is worth noting that $K$-relations (or weighted structures) provide a unifying abstraction that subsumes, e.g.,  probability distributions and bags, with possible applications extending beyond database provenance (see, e.g., \cite{AtseriasK25,grohe_et_al:LIPIcs.ICDT.2025.9}).
  
 Semiring semantics has also been defined for first-order logic \cite{abs-1712-01980, Grädel2025,Tannen17}. In this context, it has been investigated in relation to topics including Ehrenfeucht–Fraïssé games, locality, 0–1 laws, and definability up to isomorphism \cite{BiziereGN23,BrinkeGM24,GradelHNW22,GradelM21}.
 Semiring semantics has likewise been defined for more expressive logics, such as fixed-point logic \cite{DannertGNT21} and team-based logics~\cite{BarlagHKPV23}.   
 Recently,  the expressivity and model checking complexity of extensions of first-order logic under the semiring semantics  have been characterized by arithmetic circuits and BSS computations over $K$ \cite{Jelia25}. 
 
 \ourpar{Contributions}
 In this article, we extend the study of Barleg~et~al.~\cite{Jelia25} to existential second-order logic in the semiring semantics and prove analogues of Fagin's and Cook's theorems. We also generalize the aforementioned connection between the Boolean part of $\NPR$ and $\ETR$ from the reals to positive and commutative semirings. It is worth noting that our work bears analogy, e.g., to recent articles by Badia et al.~\cite{DBLP:journals/corr/abs-2507-18375,badia_et_al:LIPIcs.MFCS.2024.14} in which versions of Fagin's theorem are established for different models of weighted TMs and to work by Grädel and Meer~\cite{GM95} which provides a logical characterization for $\NPR$ in the metafinite logic framework. Our results imply that, for expressive power, $\ESOK$ and the metafinite version of Grädel and Meer are intimately related and warrant further study.

\ourpar{Organization}
In Section \ref{Prel}, we discuss the basics of semirings and our background assumptions.
In Section~\ref{logic}, we define semirings semantics for propositional ($\PLK$), first-order ($\FOK$) and existential second-order logic ($\ESOK$). The section also includes the definition of the satisfiability problem of $\PLK$. 
In Section~\ref{bss_section}, we discuss the  BSS model over semirings and in Section \ref{Fagin} we show a version of Fagin's theorem for $\ESOK$ and $\NPK$ that holds for positive and commutative semirings. In Section \ref{Cook}, we turn to the analogue of Cook's theorem and in Section \ref{ETKand} we show that the true existential theory of $K$ is a complete problem for the  Boolean part of $\NPK$. Along the way, we introduce a novel non-arithmetic reduction that allows us to generalize the aforementioned result considerably. The paper ends with a conclusion.


\section{Preliminaries}\label{Prel}
We assume familiarity with basic concepts in theoretical computer science, e.g., complexity theory~\cite{DBLP:books/daglib/0086373}.
We start with the fundamental definition of a semiring.
\begin{definition}
    A \emph{semiring} is a tuple $K=(K,+,\cdot,0,1)$, where $+$ and $\cdot$ are binary operations on $K$, $(K,+,0)$ is a commutative monoid with identity element $0$, $(K,\cdot ,1)$ is a monoid with identity element $1$, $\cdot$ left and right distributes over $+$, and $x \cdot 0 =0= 0\cdot x$ for all $x \in K$.
    $K$ is called \emph{commutative} if $(K,\cdot ,1)$ is a commutative monoid.
    As usual, we often write $ab$ instead of $a \cdot b$. 
    A semiring is \emph{ordered} if there exists a  partial order $\leq$ (meaning $\leq$ is reflexive, transitive, and antisymmetric) such that $0\le 1$, and for all $a,b, c \in K$ $a\leq b$ implies $a+c \leq b+c$, and $a\leq b \land 0 \leq c$ implies $ac \leq bc \land ca \leq cb$.
    A semiring is \emph{positive} if it has no divisors of $0$, i.\,e., $ab\neq 0$ for all $a,b\in K$, where $a\neq 0 \neq b$ and if $a+b=0$ implies that $a=b=0$.
\end{definition}
Throughout the paper let $K$ be a semiring.
The \emph{natural order} $\leq_{\rm n}$ of $K$ is  defined by $a\leq_{\rm n} b$ iff $\exists c: a+c=b$. 
 The natural order of a semiring is always a preorder (meaning $\leq$ is reflexive and transitive). Moreover, if $\leq_{\rm n}$ is antisymmetric, then the semiring is ordered by it.

Important examples of semirings are the \emph{Boolean semiring} $\mathbb{B}=(\mathbb{B},\lor,\land,0,1)$ as the simplest example of a semiring that is not a ring, the \emph{semiring of non-negative reals} $\mathbb{R}_{\geq 0}=(\mathbb{R}_{\geq 0},+,\cdot,0,1)$,
and the \emph{semiring of natural numbers} $\mathbb{N}=(\mathbb{N},+,\cdot,0,1)$. 
Further examples include
the \emph{tropical semiring} $\mathbb{T} = (\mathbb{R}\cup\{\infty\}, \min, +, \infty, 0)$,
the \emph{semiring of multivariate polynomials} $\mathbb{N}[X]=(\mathbb{N}[X],+,\cdot,0,1)$,
and the \emph{\L{}ukasiewicz semiring} $\mathbb{L} = ([0,1], \max, \cdot , 0, 1)$, used in multivalued logic.

\begin{remark}
Throughout this paper, we consider semirings that are commutative, positive, and non-trivial (i.\,e., $0 \neq 1$). 
These assumptions are made for the sake of simplicity, even though the results in Section~\ref{ETKand} do not require these assumptions except for non-triviality.
\end{remark}

\section{Logics over Semirings}\label{logic} 
In this section, 
we define the semiring semantics for propositional, first-order and existential second-order formulae 
with so-called formula (in)equality atoms~\cite{abs-1712-01980, Grädel2025,BarlagHKPV23}. 
These logics will also have access to arbitrary constants from $K$.
We will introduce them using a superscript $=$ in their names to differentiate them from logics the one defined by Grädel and Tannen~\cite{abs-1712-01980, Grädel2025}, and to emphasize that they have access to said formula (in)equalities.

\subsection{Propositional Logic over Semirings and the Satisfiability Problem}

We first introduce $\PLK$, the propositional logic over semirings.
Let $\AP$ be a countably infinite set of propositions. 
An assignment $s$ is a mapping $s\colon\{\,p,\lnot p\mid p\in\AP\,\}\to K$ that associates each atomic proposition with a semiring value.
 \begin{definition}[$\PLK$]Let $K = (K,+,\cdot,0,1)$ be a semiring and $\AP$ a set of propositional symbols.
The \emph{syntax of propositional logic over $K$ and $\AP$} is defined as follows:
    \[
          \phi\ddfn p \mid \neg p \mid c\mid \phi = \phi \mid (\phi \wedge \phi) \mid (\phi \lor \phi),
    \]
    where $p\in \AP$, $c\in K$. For ordered $K$ we additionally have $\phi \leq \phi$.
    
 Let $s\colon\{\,p,\neg p\mid p\in\AP\,\}\to K$ be an assignment. The \emph{sematics of a $\PLK$-formula $\theta$ under an assignment $s$} is denoted by $\evaluate{\theta}{s}$ and defined as follows: 
\begin{align*}
        \evaluate{p_i}{s} &= s(p), &
        \evaluate{\neg p}{s} &= s(\neg p),
        &
        \evaluate{c}{s} &= c,\\
        \evaluate{\phi\wedge\psi}{s} &= \evaluate{\phi}{s}\cdot\evaluate{\psi}{s}, &
        \evaluate{\phi\lor\psi}{s} &= \evaluate{\phi}{s}+\evaluate{\psi}{s}, \\
        \evaluate{\phi\circ\psi}{s} &=
        \begin{cases}
        1 \quad \text{if } \evaluate{\phi}{s}\circ\evaluate{\psi}{s},\\
        0 \quad \text{otherwise}, 
        \end{cases}
        \text{where } \circ\in\{=,\leq\}. \hspace{-8cm}
    \end{align*}
\end{definition}
We will make use of shortcut notations that allow us to use any of the following comparison operators in $\PLK$ formulae:
$   \phi \neq \psi \coloneqq (\phi = \psi) = 0,
    \phi \nleq \psi \coloneqq (\phi \leq \psi) \leq 0.$

Next, we define the satisfiability problem $\SATK$ for a semiring $K$ in the context of this logic. 
Since there is not exactly one value denoting truth in general, we define satisfaction with respect to any non-zero semiring value.
\begin{definition} \label{def:satk}
    The \emph{satisfiability problem} for $\PLK$ is defined as 
    \[
        \SATK \coloneqq \{\,\varphi \in \PLK \mid \text{ there exists an assignment $s$ such that $\evaluate{\varphi}{s} \neq 0$}\,\}.
    \]
\end{definition}
\begin{example}
Consider the semiring $\N$ of natural numbers.
The formula 
    $\phi = ((p \lor q) \leq (p \land q)) \land 3 \leq p \land 3 \leq q$
is satisfied by the assignment $s(p) = s(q) = 3$, with
$\evaluate{\phi}{s} = ((3 + 3) \leq (3 \cdot 3)) \cdot (3 \leq 3) \cdot (3 \leq 3) = 1$. 
As a result, we have that $\phi \in \SATN$.
On the other hand, the slightly modified formula
    $\psi = ((p \land q) \leq (p \lor q)) \land 3 \leq p \land 3 \leq q$
is not satisfied by any assignment, because either $p < 3$, $q < 3$ or $p \cdot q > p + q$.
Therefore, we get that $\evaluate{\psi}{s} = 0$ and $\psi \not\in \SATN$.
\end{example}
\subsection{First-order Logic over Semirings}

In the following, we denote by $\ar(R)$ the \textit{arity} of a relational symbol $R$. 
The set $\Var$ is the set of first-order variables.
A vocabulary $\tau$ is a set of relation symbols. 

\begin{definition}[$\FOK$] 
    \label{def:fok}
    The \emph{syntax for first-order logic over a semiring $K$} (of vocabulary~$\tau$), is defined as follows:
     \[
    \phi\ddfn x = y \mid  x \neq y \mid c\mid R(\bar{x})\mid \neg R(\bar{x}) \mid \phi = \phi \mid(\phi\wedge\phi) \mid (\phi\lor\phi) \mid \exists x\phi \mid \forall x\phi,
    \]
    where $x,y\in\Var$, $c\in K$, $R\in\tau$ and $\bar{x}$ is a tuple of variables so that $\ar(R)=|\bar{x}|$.
    For ordered semirings $K$, we additionally have that $\phi \leq \phi$. 
\end{definition}

The definition of a set of \emph{free variables} $\fv(\phi)$ of a formula $\phi \in \FOK$ extends from $\FO$ in the obvious way: $\fv(\phi\circ \psi)=\fv(\phi)\cup\fv(\psi)$, for $\circ\in \{=,\leq\}$. If $\phi$ contains no free variables, i.\,e., $\fv(\phi)=\emptyset$, it is called a \emph{sentence}. 

Let $A$ be a finite set (i.\,e., the model domain), $\tau=\{R_1,\dots,R_n\}$ a finite vocabulary, and $K$ a semiring. 
Define $\lit_{A,\tau}$ as the set of \emph{literals} over $A$, i.\,e., the set of facts and negated facts $\lit_{A,\tau}=\{\,R(\bar{a}),\neg R(\bar{a})\mid \bar{a}\in A^{\ar(R)},R\in\tau\,\}$. 
A \emph{$K$-interpretation} $\pi$ is a mapping $\pi\colon \lit_{A,\tau}\to K$. 
A $K$-interpretation $\pi$ is \emph{ordered}, if $A = \{a_0, a_1, \dots, a_{\lvert A \rvert - 1}\}$ and $\tau$ contains a binary relation $R_<$ such that 
\[
    \pi(R_<(a_i, a_j)) = 
    \begin{cases}
        1, & \textnormal{if $i < j$}, \\
        0, & \textnormal{otherwise.}
    \end{cases}
\]
An \emph{assignment} $s$ is a mapping $s\colon\Var\to A$ that associates each first-order variable with an element from the set $A$.
     
\begin{definition}\label{interpretation}
    Let $K=(K,+,\cdot,0,1)$ be a semiring, $\pi$ be a $K$-interpretation, and $s$ be an assignment.
    The \emph{semantics of a $\FOK$-formula
    $\theta$ under $\pi$ and $s$}, denoted by $\evaluate{\theta}{\pi,s}$ and defined as follows: 
    \begin{align*}
        \evaluate{x\star y}{\pi,s} &=
        \begin{cases}
        1 \quad \text{if } s(x)\star s(y),\\
        0 \quad \text{otherwise}, 
        \end{cases}
        \text{where } \star\in\{=,\neq\}, \hspace{0cm}
        &\evaluate{c}{\pi,s}&=c\\
        \evaluate{R(\bar{x})}{\pi,s} &= \pi(R(s(\bar{x}))), &
        \evaluate{\neg R(\bar{x})}{\pi,s} &= \pi(\neg R(s(\bar{x}))),\\
        \evaluate{\phi\wedge\psi}{\pi,s} &= \evaluate{\phi}{\pi,s}\cdot\evaluate{\psi}{\pi,s}, &
        \evaluate{\phi\lor\psi}{\pi,s} &= \evaluate{\phi}{\pi,s}+\evaluate{\psi}{\pi,s}, \\
        \evaluate{\exists x\phi}{\pi,s} &= \sum_{a\in A}\evaluate{\phi}{\pi,s(a/x)}, & \evaluate{\forall x\phi}{\pi,s} &= \prod_{a\in A}\evaluate{\phi}{\pi,s(a/x)}, \\
        \evaluate{\phi\circ\psi}{\pi,s} &=
        \begin{cases}
        1 \quad \text{if } \evaluate{\phi}{\pi,s}\circ\evaluate{\psi}{\pi,s},\\
        0 \quad \text{otherwise}, 
        \end{cases}
        \text{where } \circ\in\{=,\leq\}. \hspace{-4cm}
    \end{align*}
    If $\phi$ is a \emph{sentence}, we write $\evaluate{\phi}{\pi}$ for $\evaluate{\phi}{\pi,\emptyset}$.
\end{definition}
We can again define $\neq$ and $\nleq$ for formulas, in the same way as was done for $\PLK$.
Sometimes, $K$-interpretations $\pi$ are called \textit{model-defining} \cite{abs-1712-01980, Grädel2025}, which is the case if, for all $R(\bar{{a}})$, we have that $\pi(R(\bar{a}))=0$ if and only if $\pi(\neg R(\bar{a}))\neq 0$.
Sentences $\phi$ and $\psi$ of $\FOK$ are \emph{$K$-equivalent}, if $\evaluate{\phi}{\pi}=\evaluate{\psi}{\pi}$ is true for all 
$K$-interpretations. 
 
The following definition provides a way to simulate the classical Boolean connectives in the semiring framework.
\begin{definition}[Boolean connectives]\label{def:bool}
    The classical Boolean connectives are defined as follows:
    \begin{align*}
        \bnot \phi &\coloneq (\phi = 0) \\
        \phi \bor \psi &\coloneq (\phi \neq 0 \lor \psi \neq 0) \neq 0 \\
        \phi \band \psi &\coloneq (\phi \neq 0 \land \psi \neq 0) \neq 0 \equiv \bnot(\bnot \phi \bor \bnot \psi) \\
        \phi \bto \psi &\coloneq (\phi = 0 \lor \psi \neq 0) \neq 0 \equiv (\bnot \phi \bor \psi) 
    \end{align*}
\end{definition}

\subsection{Existential Second-order Logic over Semirings}
We define existential second-order quantification such that it corresponds to the existence of a $K$-relation in the Boolean sense, i.\,e., a formula with second-order quantification can only have values 0 or 1. Another natural option would be to define existential second-order quantification analogously to existential first-order quantification, by using a sum over all possible $K$-interpretations of the quantified relation. Note that in our setting, we are mainly interested in infinite semirings, since the operation table of a finite semiring can be handled with a standard Turing machine. This means that if the semantics of second-order quantification is defined by using sums, we would need to consider infinite sums.

\begin{definition}[$\ESOK$]

    The \emph{syntax for existential second-order logic} (of vocabulary $\tau$) over a (ordered) semiring $K$, is defined as follows
    \[
        \phi\ddfn \alpha \mid \exists R \phi,
    \]
    where $\alpha\in\FOK$ and $R$ is a relation symbol not in $\tau$. 
    We say that a $K$-interpretation $\pi'\colon \lit_{A,\tau\cup\{R\}}\to K$ \emph{extends} $\pi$ \emph{with} $R$ if $\pi'(\ell)=\pi(\ell) \text{ for all } \ell\in \lit_{A,\tau}$. 
    The \emph{semantics of existential second-order quantification} is then defined as follows.
    \[
        \evaluate{\exists R\phi}{\pi,s}:=
        \begin{cases}
            1, \text{ if there is } \pi' \text{ that extends } \pi \text{ with } R\text{ such that }\evaluate{\phi}{\pi',s}\neq 0, \\
            0, \text{ otherwise}.
        \end{cases}
    \]
\end{definition}
This definition enforces that all quantified relation symbols are in the beginning of the formula. 
Allowing quantified relations to appear anywhere within a formula, especially inside comparisons, would allow expressive power beyond existential second-order logic.
This is the case, because it would allow that negated existential second-order quantifier could be constructed, e.\,g., $\lnot \exists P \psi$  could be simulated by $(\exists P \psi)=0$.

In Example~\ref{ex:4FEAS} below, we show that the set $4\textnormal{-FEAS}_{\geq 0}$ of degree $4$ polynomials over $\R_{\geq 0}$ that have a zero in $\R_{\geq 0}$ can be characterized in $\mathrm{ESO}_{\R_{\geq 0}}^=$.
Given that the generalization of $4\textnormal{-FEAS}_{\geq 0}$ to $\R$ is know to be complete for $\mathrm{NP}_\R$ which in turn is closely related to existential second-order logic, this does not come as a surprise.
We will take a closer look at the connection between $\mathrm{NP}_K$ and $\ESOK$ over positive semirings in Section~\ref{Fagin}.

\begin{example}[Feasibility of degree $4$ polynomials]
    \label{ex:4FEAS}
    The feasibility problem of real degree $4$ polynomials $4$-FEAS is known to be $\mathrm{NP}_\R$-complete~\cite{DBLP:conf/focs/BlumSS88}.
    In order to match our setting, we define the same problem for the positive reals, i.\,e.,  
    \[
        4\textnormal{-FEAS}_{\geq 0}\coloneqq \{f \in \R_{\geq 0}[x_1, \dots, x_n] \mid \deg(f) \leq 4, \exists \overline{y} \in \R_{\geq 0}^n f(\overline{y}) = 0\},
    \]
    where $\deg(f)$ denotes the degree of $f$.
    We define an $\mathrm{ESO}^=_{\R_{\geq 0}}$ sentence $\varphi_{4\textnormal{-FEAS}}$ that defines $4$-FEAS$_{\geq 0}$.
    For any $n \in \N$, let $[n]$ denote the $n$ long prefix of the positive natural numbers, i.\,e., $[n] \coloneqq \{1, \dots, n\}$.
    A degree $4$ polynomial $f \in \R_{\geq 0}[x_1, \dots, x_n]$ can be written as
    \begin{alignat*}{2}
        f(x_1, \dots, x_n) =  c_0 + &\smashoperator[l]{\sum_{(i_1) \in [n]}} &&c_{(i_1)} \cdot x_{i_1} +\phantom{ }\\
         & \smashoperator[l]{\sum_{(i_1, i_2) \in [n]^2}} &&c_{(i_1, i_2)} \cdot x_{i_1} \cdot x_{i_2} +\phantom{ }\\
        & \smashoperator[l]{\sum_{(i_1, i_2, i_3) \in [n]^3}} &&c_{(i_1, i_2, i_3)} \cdot x_{i_1} \cdot x_{i_2} \cdot x_{i_3} +\phantom{ }\\
        &\smashoperator[l]{\sum_{(i_1, i_2, i_3, i_4) \in [n]^4}} &&c_{(i_1, i_2, i_3, i_4)} \cdot x_{i_1} \cdot x_{i_2} \cdot x_{i_3} \cdot x_{i_4}
    \end{alignat*}
    where $c_0 \in \R_{\geq 0}$ and $c_{(i_1, \dots, i_j)} \in \R_{\geq 0}$ for all $(i_1, \dots, i_j) \in [n]^j$.

    Let $A = [n]$ and let $\tau = \{R_0^0, R_1^1, R_2^2, R_3^3, R_4^4\}$, where $R^x$ means that $\ar(R) = x$.
    We can represent $f$ as a $K$-interpretation $\pi \colon \lit_{A, \tau} \to K$, where $\evaluate{R_0()}{\pi} = c_0$ and $\evaluate{R_i(x_1, \dots, x_i)}{\pi} = c_{(x_1, \dots, x_i)}$.
    With this sort of representation, the following sentence defines $4$-FEAS$_{\geq 0}$:
    \begin{align*}
        \varphi_{4\textnormal{-FEAS}} \coloneqq & \exists Z^1 0 = \bigg(R_0() \lor \exists y_1 \Big(R_1(y_1) \land Z(y_1) \Big) \lor \phantom{ } \\
        & \exists y_1 \exists y_2 \Big(R_2(y_1, y_2) \land Z(y_1) \land Z(y_2)\Big) \lor \phantom{ } \\
        & \exists y_1 \exists y_2 \exists y_3 \Big(R_3(y_1, y_2, y_3) \land Z(y_1) \land Z(y_2) \land Z(y_3) \Big) \lor \phantom{ } \\
        & \exists y_1 \exists y_2 \exists y_3 \exists y_4 \Big(R_4(y_1, y_2, y_3, y_4) \land Z(y_1) \land Z(y_2) \land Z(y_3) \land Z(y_4)\Big)\bigg) 
    \end{align*}
    Since $R_0()$ yields $c_0$ and $R_i(y_1, \dots, y_i)$ yields $c_{(y_1, \dots, y_i)}$, the $K$-relation $Z$ determines a valuation of the variables $x_1, \dots, x_n$ in $f$.
    Thus, $\varphi_{4\textnormal{-FEAS}}$ is true, if and only if there exists an assignment to the variables of $f$ such that $f$ becomes zero.
    \begin{remark}
        Note that the quantification of the $0$-ary $K$-relation $R_0$ essentially corresponds to quantifying over the elements of the semiring.
    \end{remark}
\end{example}

\section{BSS Machines over Semirings and a Semiring Analogue to NP}\label{bss_section} 
BSS machines were introduced as a model for real computation by Blum, Shub and Smale~\cite{DBLP:conf/focs/BlumSS88} and were originally defined over rings.
The computation nodes of a BSS machine are usually formulated in terms of polynomial functions with coefficients from the underlying ring.
In case this ring is actually a field, also rational functions, i.\,e., divisions of polynomials, are permitted.
We modify this model to work over semirings $K$ and call it a $\BSSK$ machine. 

In BSS machines, the analogue of a Turing machine's tape is a state space consisting of infinite tuples with a dedicated centre.   
\begin{definition}
    For a set $A$, we let $A^* = \bigcup_{n\geq 0} A^n$.
    By $A_*$ we denote the set $A^{\Z}$ of all $x= (\dots,x_{-2}, x_{-1}, x_0 \textbf{.} x_1, x_2, \dots)$, where ``\textbf{.}'' marks the first position.
    We define two shift operations on $A_*$: shift left $\sigma_{\ell}$, where $\sigma_{\ell}(x)_i = x_{i+1}$, and its inverse, shift right $\sigma_r$, where $\sigma_r(x)_i = x_{i-1}$, for all $i \in \Z$.
\end{definition}
Based on this, we define our BSS machine model of computation over semirings. 
\begin{definition}[$\BSSK$ machines] \label{def:BSSK} 
    A $\BSSK$ machine $M$ consists of an input space $\mathcal{I}=K^*$, a state space $\mathcal{S}=K_*$ and an output space $\mathcal{O}=K^*$, together with a
    connected directed graph whose nodes are labelled by $1, \ldots, N$. 
    The nodes are of five different types as follows.
    \begin{description}
        \item[\normalfont\bfseries\sffamily Input node:] The node labelled by $1$ is the only input node. The node is associated with a next node $\beta(1)$ and the input mapping $g_I\colon \mathcal{I} \to \mathcal{S}$.
        \item[\normalfont\bfseries\sffamily Output node:] The node labelled by $N$ is the only output node. This node is not associated with any next node. 
        Once this node is reached, the computation halts, and the result of the computation is placed on the output space by the output mapping $g_O\colon \mathcal{S}\to \mathcal{O}$.
        \item[\normalfont\bfseries\sffamily Computation nodes:] A computation node $m$ is associated with a next node $\beta(m)$ and a mapping $g_m\colon \mathcal{S}\to \mathcal{S}$ such that for some $c \in K$ and $i,j,k\in \mathbb{Z}$ the mapping $g_m$ is identity on all coordinates different from $i$ and on current coordinate $i$ one of the following holds:
        \begin{itemize}
            \item $g_m(x)_i =x_j+x_k$ (addition),
            \item $g_m(x)_i = x_j\cdot x_k$ (multiplication),
            \item $g_m(x)_i = c$ (constant assignment).
        \end{itemize} 
        \item[\normalfont\bfseries\sffamily Branch nodes:] A branch node $m$ is associated with nodes $\beta^-(m)$ and $\beta^+(m)$. 
        Given $x\in \mathcal{S}$ the next node is $\beta^-(m)$ if $x_1 = x_2$ and $\beta^+(m)$, otherwise. 
        If $K$ is ordered, instead the next node is $\beta^-(m)$ if $x_1 \leq x_2$ and $\beta^+(m)$, otherwise.
        \item[\normalfont\bfseries\sffamily Shift nodes:] A shift node $m$ is associated either with shift left $\sigma_{\ell}$ or shift right $\sigma_r$, and a next node $\beta(m)$.
    \end{description}
    The \emph{input mapping} $g_I\colon \mathcal{I} \to \mathcal{S}$ places an
    input $(x_1, \ldots ,x_n)$ in the state
    \[
        (\ldots , 0, \underbrace{1, \dots, 1}_{n}, 0\textbf{.} x_1, \ldots ,x_n, 0,\ldots )\in \mathcal{S},
    \]
    where the size of the input $n$ is encoded in unary in the $n$ first negative coordinates.
    The \emph{output mapping} $g_O\colon \mathcal{S}\to \mathcal{O}$ maps a state to the string consisting of its first $\ell$ positive coordinates, where $\ell$ is the number of consecutive ones stored in the negative coordinates starting from the first negative coordinate.
    For instance,
    \[
        g_O((\ldots ,2,1,1,1,0\textbf{.}x_1, x_2,x_3,x_4,\ldots )) = (x_1, x_2,x_3).
    \]
    A \emph{configuration} at any moment of computation consists of a node $m\in \{1, \ldots ,N\}$ and a current state $x\in\mathcal{S}$.
    The (sometimes partial) \emph{input-output function} $f_M \colon K^*\to K^*$ of a machine $M$ is now defined in the obvious manner.
    A function $f \colon K^*\to K^*$ is \emph{computable} if $f=f_M$ for some machine $M$. A language $L \subseteq K^*$ is \emph{decided} by a $\BSSK$ machine $M$ if \[
        \bar{x} \in L \quad\text{if and only if} \quad f_M(\bar{x}) \neq 0,
    \]
    holds for all $\bar{x} \in L$.
\end{definition}

Analogously to classical complexity theory, we can argue about the complexity of a language over a semiring via the computation time of a $\BSSK$ machine deciding it.  
\begin{definition}
    Let $M$ be a $\BSSK$ machine.
    On any input $\bar{x} \in K^*$, the \emph{computation time} of $M$ on input $\bar{x}$ is the number of steps that $M$ takes before reaching the output node.
\end{definition}
Non-deterministic decisions are realised using the same idea of verification as in the classical model of Turing machines. 
In our case, additional $K$-values are guessed and added after the input to the $\BSSK$ machine by modifying the input mapping.

\begin{definition}\label{def:nondeterministic_BSS_machine}
    A language $L \subseteq K^*$ is \emph{decided non-deterministically} by a $\BSSK$ machine $M$, if for all $\bar{x} \in K^*$,
    \[
        \bar{x} \in L  \text{ iff there exists an $\bar{x}' \in K^*$ (called the guess) such that } f_M(\bar{x}, \bar{x}') \neq 0,
    \]
    where the modified input mapping $g_I' \colon \mathcal{I} \to \mathcal{S}$, which places $((x_1, \dots, x_n), (x_1', \dots, x_m'))$ in the state
    \[
        (\dots, 0, \underbrace{1, \dots, 1}_{m}, 0, \underbrace{1, \dots, 1}_{n}, 0. x_1, \dots, x_n, x_1', \dots, x_m', 0, \dots) \in \mathcal{S}
    \]
    is used.
\end{definition}
In accordance with the work of Blum, Shub and Smale \cite{DBLP:conf/focs/BlumSS88} we call such a machine a \emph{non-deterministic $\BSSK$ machine}.
Based on this definition, we are able to define a semiring analogue to the class $\NP$ via the model of non-deterministic $\BSSK$ machines.

\begin{definition}
    The class $\NPK$ consists of all languages $L \subseteq K^*$ for which there exists a $\BSSK$ machine $M$ that runs in polynomial time such that $M$ decides $L$ non-deterministically.
\end{definition}
Similarly, we define the notion of polynomial time many-to-one reductions with respect to functions computable in polynomial time by $\BSSK$ machines.
\begin{definition}[$\leqpmk$]\label{def:leqpmk}
    Let $P, Q \subseteq K^*$.
    We then write $P \leqpmk Q$ if there exists a function $f \colon K^* \to K^*$ such that $f$ is computable in polynomial time by a $\BSSK$ machine and
    for all $\bar{x} \in K^*$ it holds that $x \in P$ iff  $f(\bar{x}) \in Q$.
\end{definition}
In line with the previous definitions, we now define the analogue of $\NP$-completeness over a semiring $K$.
\begin{definition} \label{def:npk_complete}
    A language $L \subseteq K^*$ is \emph{$\NPK$-complete} if $L \in \NPK$ (membership in $\NPK$) and we have that $L' \leqpmk L$, for all languages $L' \in \NPK$ (hardness for $\NPK$).
\end{definition}

\section{A Version of Fagin's Theorem}\label{Fagin}

We now present a version of Fagin's theorem in the context of semiring semantics, which is used to characterise our class $\NPK$ via existential second-order logic over a semiring. 
The proof follows the classical structure, but has been adapted to work with our model of $\BSSK$ machines. 
From the logical perspective, we consider ordered $K$-interpretations in which a single unary $K$-relation $I$ is used to encode an input in $K^*$. 
Before relating the complexity class $\NPK$ and the logic $\ESOK$, it is necessary to consider how formulae and $K$-interpretations are encoded.

\begin{remark}\label{rem:encoding}
    A $K$-interpretation can be encoded as elements of $K^*$ (see~\cite[Sect.~2.4]{Jelia25}).
    Similarly, we can encode formulae of $\FOK$ and $\ESOK$ as elements of $K^*$.
    For these encodings, we follow the convention that elements of the semiring $K$ are encoded
    as they are, i.\,e., each $x \in K$ is encoded by $x$ itself, and other information
    is encoded using binary encodings with the semiring elements $0$ and $1$.
    Then, for instance, symbols for connectives, quantifiers, variables and arithmetic operations are
    encoded as binary strings, whereas constants corresponding to elements of $K$ are encoded as
    their interpretations in $K$.
\end{remark}
For brevity, from now on, we overload the notation $\ESOK$ and use it to denote the set of formulae, as well as the sets of $K$-interpretations $\{\,\pi \mid \evaluate{\varphi}{\pi} \neq 0\,\}$ and encodings of $K$-interpretations 
$\{\,\langle \pi \rangle \mid \evaluate{\varphi}{\pi} \neq 0\,\}$
\emph{defined} by $\ESOK$ sentences $\varphi$.

\begin{lemma}\label{lem:eso_in_np}
    $\ESOK \subseteq \NPK$
\end{lemma}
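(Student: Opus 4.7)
The plan is to turn any $\ESOK$ sentence $\varphi = \exists R_1 \cdots \exists R_m \psi$, with $\psi \in \FOK$, into a polynomial-time non-deterministic $\BSSK$ machine $M_\varphi$ that accepts exactly those encodings $\langle \pi \rangle$ of $K$-interpretations of the base vocabulary with $\evaluate{\varphi}{\pi} \neq 0$. The high-level shape is the classical one: guess an extension of $\pi$ that interprets $R_1, \ldots, R_m$, then deterministically compute $\evaluate{\psi}{\pi'}$ and accept iff the output is non-zero.

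The first step of $M_\varphi$ is to recover the domain size $n = |A|$ from $\langle \pi \rangle$, which is possible in polynomial time by Remark~\ref{rem:encoding}. Writing $a_i = \ar(R_i)$, the guess prescribed by Definition~\ref{def:nondeterministic_BSS_machine} will be a block $\bar x' \in K^*$ of length $\sum_{i=1}^{m} 2 n^{a_i}$, read in a fixed lexicographic order as the values $\pi'(R_i(\bar a))$ and $\pi'(\neg R_i(\bar a))$ for $\bar a \in A^{a_i}$ and $i \in \{1,\dots,m\}$. Since $m$ and the $a_i$ depend only on the fixed sentence $\varphi$, this block has length polynomial in $|\langle \pi \rangle|$, and together with $\pi$ it yields the extended interpretation $\pi'$ required by the semantics of $\exists R_i$.

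In the second step $M_\varphi$ recursively evaluates $\psi$ on $\pi'$. Each construct of $\FOK$ has a direct $\BSSK$ counterpart: constants $c \in K$ are written by a constant-assignment node; atoms $R(\bar x)$ and $\neg R(\bar x)$ are looked up from the input or guess region via shift nodes; $\phi_1 \land \phi_2$ and $\phi_1 \lor \phi_2$ use a multiplication and an addition node; comparison atoms $\phi_1 = \phi_2$ and $\phi_1 \leq \phi_2$ are realised with a branch node followed by writing $0$ or $1$; and $\exists x \phi$, $\forall x \phi$ are implemented as loops over the $n$ elements of $A$ that accumulate a running sum or product. Since $|\psi|$ is constant and each subformula contributes at most polynomially in $n$ many steps, the total running time is polynomial in $n$, hence in $|\langle \pi \rangle|$. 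This is essentially the model-checking argument underlying the $\FOK$ results of Barlag et al.~\cite{Jelia25}, specialised to a plain recursive evaluator.

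Correctness is then immediate from the semantics of $\ESOK$: an accepting guess exists iff some extension $\pi'$ of $\pi$ by $R_1, \ldots, R_m$ satisfies $\evaluate{\psi}{\pi'} \neq 0$, which is exactly the condition $\evaluate{\exists R_1 \cdots \exists R_m \psi}{\pi} = 1 \neq 0$. The only genuine design point — the closest thing to an obstacle — is fixing the length of the guess block without inspecting its contents; this works because $n$ is extractable from $\langle \pi \rangle$ in polynomial time and the arities $a_i$ are part of the fixed sentence. In particular, although $K$ may be infinite, the construction never needs to enumerate $K$ itself: by Definition~\ref{def:nondeterministic_BSS_machine} the non-determinism of a $\BSSK$ machine selects a block of semiring values directly, which is precisely what the existential second-order quantifiers ask for.
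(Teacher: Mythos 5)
Your proposal is correct and follows essentially the same route as the paper: non-deterministically guess the $K$-interpretation of the quantified relation symbols (a block of polynomially many semiring values, since the arities are fixed by the sentence) and then deterministically evaluate the first-order matrix in polynomial time, which the paper delegates to the $\FOK$ model-checking result of Barlag et al.~\cite{Jelia25} that you also invoke. Your additional detail on sizing the guess block and on the recursive evaluator is a harmless elaboration of the same argument.
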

\begin{proof}
    Let $\phi \coloneq \exists R_1 \dots \exists R_k \psi $ be an existential second-order formula with quantified relation symbols $R_1, \dots, R_k$. 
    Per definition, we have that $\evaluate{\phi}{\pi} \neq 0$ for a $K$-interpretation $\pi$ if there exists a $K$-interpretation $\pi'$ for the symbols $R_1, \dots, R_k$ such that $\evaluate{\psi}{\pi \cup \pi'} \neq 0$.
    There exists a $\BSSK$ machine $M$ that deterministically evaluates $\psi$ in polynomial time given the encoding of a $K$-interpretation $\langle \pi\cup\pi' \rangle$ as input (see~\cite[Corollary 27]{Jelia25} ).
    Thus, there is a non-deterministic $\BSSK$ machine $M'$ that, with input $\langle \pi \rangle$, guesses $\langle \pi' \rangle$, and accepts if and only if $\evaluate{\phi}{\pi} \neq 0$.    
    Furthermore, it holds that $\evaluate{\Phi}{\pi} \neq 0$ if only if $\evaluate{\Psi}{\pi \cup \pi'} \neq 0$ if and only if $f_{M'}( \langle \pi \rangle, \langle \pi' \rangle) \neq 0$.
\end{proof}

Given the previous lemma, we can now provide an exact characterisation of $\NPK$ via existential second-order formulae.
\begin{theorem} \label{thm:fagin}
   $\ESOK = \NPK$
\end{theorem}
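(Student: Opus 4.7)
By Lemma~\ref{lem:eso_in_np} it only remains to establish $\NPK \subseteq \ESOK$, the analogue of the hard direction of Fagin's theorem. The plan is the classical one, adapted to the semiring semantics: given a non-deterministic $\BSSK$ machine $M$ deciding a language $L\subseteq K^*$ in time $p(n)$, I would construct an $\ESOK$ sentence $\phi_M$ whose existentially quantified $K$-relations describe an accepting run of $M$ on the input. Encode an input $\bar x=(x_1,\ldots,x_n)$ as an ordered $K$-interpretation $\pi$ with domain $A=\{a_0,\ldots,a_{n-1}\}$, built-in linear order $R_<$, and a unary $K$-relation $I$ satisfying $\pi(I(a_i))=x_i$.

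Since time steps and tape positions range over an interval of size $\bO(p(n))$, I would represent them by $k$-tuples over $A$ for a constant $k$ depending on $\deg p$, and define from $R_<$ in $\FOK$ the corresponding lexicographic successor $\mathrm{Succ}$, minimum tuple $\bar 0$, and the polynomial bounds needed. The sentence $\phi_M$ then existentially quantifies: a guess relation $G$ of arity $k$ whose semiring values supply the witness $\bar x'$; a cell-value relation $V$ of arity $2k$, where $\evaluate{V(\bar t,\bar i)}{\cdot}$ records the content of cell $\bar i$ at time $\bar t$; and a node-indicator relation $N_m$ of arity $k$ for each of the finitely many nodes $m$ of $M$. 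Its body is
\[
\mathrm{Init}\band\mathrm{Trans}\band\mathrm{Accept},
\]
where $\mathrm{Init}$ states $N_1(\bar 0)$ and pins $V(\bar 0,\bar i)$ by formula equalities of the form $V(\bar 0,\bar i)=I(\bar i)$, $V(\bar 0,\bar i)=G(\bar i)$, or $V(\bar 0,\bar i)=0$ on the appropriate input, guess, and blank segments, and where $\mathrm{Accept}$ asserts the existence of some $\bar t$ with $N_N(\bar t)$ together with the output condition $V(\bar t,\bar 1)\neq 0$.

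The conjunct $\mathrm{Trans}$ is, for every pair $(\bar t,\bar t')$ with $\mathrm{Succ}(\bar t,\bar t')$ and every node $m$, an implication $N_m(\bar t)\bto \chi_m(\bar t,\bar t')$, where $\chi_m$ describes the transition at node $m$. The key point is that the arithmetic updates are expressible directly through formula equalities: an addition node writing into coordinate $\bar i$ is captured by $V(\bar t',\bar i)=(V(\bar t,\bar j)\lor V(\bar t,\bar k))$, since $\lor$ evaluates to $+$ in $K$; multiplication nodes use $\land$ analogously, with commutativity of $K$ ensuring the assignment $V(\bar t',\bar i)=(V(\bar t,\bar j)\land V(\bar t,\bar k))$ is unambiguous; constant assignments use $V(\bar t',\bar i)=c$; branch nodes toggle $N_{\beta^-(m)}$ or $N_{\beta^+(m)}$ according to $V(\bar t,\bar 1)\star V(\bar t,\bar 2)$ for $\star\in\{=,\leq\}$; and shift nodes re-index cells by $\mathrm{Succ}$. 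Inertia of untouched cells and unreached nodes is enforced by the corresponding equalities and negations.

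The principal obstacle is that conjunction and existential quantification in $\FOK$ are evaluated by semiring multiplication and summation rather than Boolean operations, so one must ensure that each conjunct contributes a nonzero value exactly when the intended property holds. Here positivity of $K$ is essential: the absence of zero divisors makes a product of nonzero semiring values nonzero, and the condition $a+b=0\Rightarrow a=b=0$ guarantees that an existential disjunction is nonzero iff some witness is nonzero. To keep indicator subformulas strictly $0/1$ valued, I would wrap every bookkeeping and comparison atom using the Boolean connectives $\band,\bor,\bnot$ from Definition~\ref{def:bool}; the outermost existential second-order quantifiers are already $0/1$ valued by definition. A straightforward induction on the computation time of $M$ then shows that $\evaluate{\phi_M}{\pi}\neq 0$ iff some choice of $G,V,N_1,\ldots,N_N$ encodes an accepting run of $M$ on $(\bar x,\bar x')$, which by Definition~\ref{def:nondeterministic_BSS_machine} is equivalent to $\bar x\in L$, yielding the required inclusion.
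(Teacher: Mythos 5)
Your proposal is correct and follows essentially the same route as the paper: tuple-encoding of time steps and tape positions over the ordered domain, existentially quantified $K$-relations for cell contents and node indicators, formula equalities with the semiring $\lor$/$\land$ to express addition and multiplication updates, Boolean connectives for all bookkeeping, and positivity of $K$ to make the restricted quantifiers and connectives behave classically. The only cosmetic difference is that you introduce an explicit guess relation $G$, whereas the paper leaves the guess as the unconstrained portion of $V$ at time zero (with only its length pinned down); both devices are equivalent.
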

\begin{proof}
    Since Lemma~\ref{lem:eso_in_np} gives us $\ESOK \subseteq \NPK$, the converse containment, $\NPK \subseteq \ESOK$, is left to prove. 
    As mentioned in the beginning of this section, machine inputs of length $n$ are interpreted as ordered $K$-interpretations of size $n$ with one unary $K$-relation $I$, such that $I(x)$ gives the $x$-th element of the machine input.

    Let $M$ be the $\BSSK$ machine that runs in time $T(n) \leq n^z - 1$ for some $z\in \N$ on input $I$ of length $n$.
    We use $\bar{t}=(t_1, \dots, t_z) \in A^z$ and $\bar{p} = (d, p_1, \dots, p_z) \in A^{z+1}$, where $A = \{a_0, \dots, a_{n - 1}\}$ is our domain, to identify the time step and position of $M$. 
    As $\BSSK$ machines have a tape that is infinite in both directions we use $q \in \{a_0, a_1\} \subseteq A$ to distinguish between negative and positive positions.
    We say that $\bar{p}_\ell$, where $\ell \in \Z$, denotes the $\ell$th position.
    For example, consider $A = \{a_0, a_1\}$ and $z = 2$, then we have that $\bar{p}_1 = (a_1,a_0,a_1)$ and $\bar{p}_{-3} = (a_0,a_1,a_1)$, where the latter is the smallest expressible position $\bar{p}_{\min}$ in this example.
    
    We define an $\ESOK$ formula $\Phi$ which describes that there exists a computation such that $M$ accepts. 
    We will use the following relations to describe a $\BSSK$ machine $M$:
    \begin{itemize}
        \item $V(\bar{t}, \bar{p})$ describes the tape content at time step $\bar{t}$ and position $\bar{p}$
        \item $Q_s(\bar{t}) = \begin{cases}
            1, & \text{if $M$ is at node $s$ at time step $\bar{t}$}\\
            0, & \text{else}
        \end{cases}$
    \end{itemize}
    We define the second-order formula  
    \[\Phi \dfn \exists V^{2z+1}, Q^z_{s_1}, \dots, Q^z_{s_{N}} \varphi,\]
    that existentially quantifies the relation $V(\overline{t}, \overline{p})$ with arity $2z+1$ and $Q_{s}(\overline{t})$ with arity $z$ for each node $s \in S$. 
    The $\FOK$ formula $\varphi$ describes our machine $M$.
    Note that, except for a few cases, all connectives can be replaced by Boolean connectives, as defined in Definition~\ref{def:bool}. 
    This is because we only use them in the context of binary comparisons.

    Let $S$ be the set of all nodes of $M$. 
    As defined in Definition \ref{def:BSSK}, a node $s$ is of one of five different types. 
    We divide $S$ accordingly as described in the following.
    Let $S_\text{add}$/$S_\text{mult}$/$S_{\text{const}}$ be the sets of all addition/multiplication/constant nodes.
    These are associated with values $i_s,j_s, k_s \in \Z$ and a $c_s\in K$ via the mapping $g$ for all nodes $s$ in $S_\text{add}$/$S_\text{mult}$/$S_{\text{const}}$.
    They describe that the computation is applied to the operands at positions $\bar{p}_{j_s}$ and $\bar{p}_{k_s}$.
    The result is stored at position $\bar{p}_{i_s}$. 
    For constant nodes, $c_s$ is the value that is assigned at position $\bar{p}_{i_s}$.
    The set $S_{\text{branch}}$ includes all branching nodes. 
    Those are associated with the two positions they may branch to, the current position $\bar{p}_1$, the next position $\bar{p}_2$ and the next node functions $\beta^-$ and $\beta^+$ of $M$ that give the node to branch to.
    The set $S_{\text{shift}}$ includes the shift nodes which perform either a right or left shift of the tape.
    The input node is labelled with 1 and the output node with $N$.

    For readability, we use restricted quantifiers $(\forall x.\alpha)\varphi \dfn (\forall x)(\alpha \bto \varphi)$ and $(\exists x.\alpha)\varphi \dfn (\exists x)(\alpha \band \varphi)$.
    Now, define $\varphi$ as follows:
    \[\varphi \dfn \forall \bar{t} \forall \bar{p} \exists \bar{t}'.(\bar{t}'=\bar{t}+1) (\varphi_{\text{input}} \land \varphi_{\text{output}} \land \varphi_{\text{arithmetic}} \land \varphi_{\text{branching}} \land \varphi_{\text{constraint}}).\]
    We write $\bar{t}'=\bar{t}+1$ to define the next time step which is easily expressible using the order relation. 
    Likewise, we will write $\bar{p}'=\bar{p}+1$ and $\bar{p}'=\bar{p}-1$, to denote a shift to the position by one in either direction. 
    For simplicity, we assume that $\bar{p}_{\max}+1=\bar{p}_{\max}$ and $\bar{p}_{\min}-1=\bar{p}_{\min}$.
    Although information is technically lost at the edge of the bounded tape, the tape is large enough that this cannot affect the computation due to time constraints of the machine.

    The subformula $\varphi_{\text{input}}$ encodes the input configuration.
    The formulae $\varphi_{\text{arithmetic}}$ and $\varphi_{\text{branching}}$ describe the current configuration of $M$ after a move. 
    The next configuration depends on the current node $s$, in particular, the kind of operation performed at this node. 
    The constraints of our machine model are represented in $\varphi_{\text{constraint}}$, and $\varphi_{\text{output}}$ describes the termination of the computation.
    
    To correctly read our input into the input state, we define:
\begin{alignat}{1}
    \varphi_{\text{input}} \coloneq &(Q_1(\bar{t}) =1) \bto \bar{t} = \bar{t}_0 \,\land \nonumber\\
    &\forall x \left(V(\bar{t}, (a_1, a_0, \dots, a_0, x)) = I(x) \land V(\bar{t}, (a_0, a_0, \dots, a_0, x)) = 1 \right) \land\\
    &V(\bar{t}, (a_0, a_0, \dots, a_0, a_1, a_0)) = 0 \land V(\bar{t}, (a_1, a_0, \dots, a_0, a_1, a_0)) = 0 \, \land\\
    &\exists \bar{d} \bigl( \forall \bar{d}'. ((a_0, \dots a_0,a_1, a_0) < \bar{d}' \land \bar{d}' \leq \bar{d}) \, (V(\bar{t}, (a_0, \bar{d}'))=1) \, \land\\
    &\phantom{\exists \bar{d} \bigl(} \forall \bar{d}'.(\bar{d}'>\bar{d})\, (V(\bar{t}, (a_0, \bar{d}')) = 0 \land V(\bar{t}, (a_1, \bar{d}')) = 0)\bigr),
\end{alignat}
where $a_0, a_1 \in A$ denote negative and positive positions respectively. 
We existentially quantify over the combined length $d=n+m$ of input and guess.
Our input $I$ with $|I|=n$ is read and written on the first positions on the tape.
As stated in Definition \ref{def:nondeterministic_BSS_machine}, the first $n$ negative positions contain ones to encode the length $n$ of the input (1).
A zero separates the input from the guess and the ones encoding the length of the input from the encoding of the length of the guess (2).
A one is placed on the negative positions between $n+1$ and $d$ to encode the length of the guess (3).
To ensure that the guess corresponds to this length all negative/positive positions on the tape smaller/greater than $d+1$ are zero (4). 
    
    For arithmetic kinds of nodes, i.\,e., addition, multiplication and constant assignment we define
    $\varphi_{\text{arithmetic}}$ as:
    \begin{alignat*}{2}
        \varphi_{\text{arithmetic}} &\coloneq \notag\\
        & \smashoperator[l]{\bigwedge_{s \in S_\text{mult}}} &&\bigl((Q_s(\bar{t})=1) \bto V(\bar{t}', \bar{p}_{i_s})= V(\bar{t}, \bar{p}_{j_s}) \land V(\bar{t}, \bar{p}_{k_s})\bigr) \land\\
         & \smashoperator[l]{\bigwedge_{s \in S_\text{add}}} &&\bigl((Q_s(\bar{t})=1) \bto V(\bar{t}', \bar{p}_{i_s})= V(\bar{t}, \bar{p}_{j_s}) \lor V(\bar{t}, \bar{p}_{k_s})\bigr) \land\\
        & \smashoperator[l]{\bigwedge_{s \in S_{\text{const}}}} &&\bigl((Q_s(\bar{t})=1) \bto V(\bar{t}', \bar{p}_{i_s}) = c_s\bigr),
    \end{alignat*}
    where $i_s,j_s,k_s \in \Z$ to denote the $i_s$th, $j_s$th and $k_s$th position associated with the arithmetic node $s$ as specified in Definition \ref{def:BSSK}.    
    Notice that in this formula we need $\lor$ and $\land$ as defined for the semiring to add and multiply $V(\bar{t}, \bar{p}_{j_s})$ and $V(\bar{t}, \bar{p}_{k_s})$.
    
    For branching, we distinguish between the branching, non-branching, and shift nodes and define 
    \begin{alignat*}{2}
        \varphi_{\text{branching}} &\coloneq \notag\\
        & \smashoperator[l]{\bigwedge_{s \in S_{\text{shift}}}} &&\bigl((Q_s(\bar{t})=1) \bto \exists \bar{p}' . \bigl(\bar{p}' = \bar{p} \star_s 1) (V(\bar{t}', \bar{p}) = V(\bar{t}, \bar{p}')\bigr)\bigr) \land \\
        & \smashoperator[l]{\bigwedge_{s \in S\setminus S_{\text{branch}}}} &&\bigl((Q_s(\bar{t})=1) \bto Q_{\beta(s)}(\bar{t}') = 1\bigr) \land\\
        & \smashoperator[l]{\bigwedge_{s \in S_{\text{branch}}}} &&\bigl((Q_s(\bar{t})=1) \bto (V(\bar{t}, \bar{p}_1) \leq V(\bar{t}, \bar{p}_2) \bto Q_{\beta^+(s)}(\bar{t}') = 1) \\
        & && \land (V(\bar{t}, \bar{p}_1) \nleq V(\bar{t}, \bar{p}_2) \bto Q_{\beta^-(s)}(\bar{t}') = 1)\bigr), 
    \end{alignat*}
    where $\star_s \in \{-, +\}$ to denote either a right or left shift depending on $s$. 
    
    The constraints of our machine model are enforced with the following formula:
    \begin{alignat}{2}
        \varphi_{\text{constraint}} &\coloneq \notag\\
       &\smashoperator[l]{\bigwedge_{s \in S_{\text{branch}}}} &&\bigl((Q_s(\bar{t})=1) \bto (V(\bar{t}', \bar{p})= V(\bar{t}, \bar{p}))\bigr) \land\\
       &\smashoperator[l]{\bigwedge_{s \in (S_\text{add} \cup S_\text{mult} \cup S_{\text{const}})}} &&\bigl(\bar{p} \neq \bar{p}_i \bto(V(\bar{t}, \bar{p}) = V(\bar{t}', \bar{p}))\bigr) \land\\
       &\smashoperator[l]{\bigwedge_{s \in S}} && \bigl((Q_s(\bar{t})=1) \bto \bigwedge_{s \neq s' \in S} Q_{s'}(\bar{t})=0\bigr).
    \end{alignat}
    This corresponds to the constraints:
    \begin{enumerate}[label=(\arabic*)]
        \item The content of the tape does not change for branching nodes.
        \item No more than one cell can change per step for arithmetic nodes.
        \item The machine remains in a single node at a given time point, with only one node change occurring per step.
    \end{enumerate}
    
We require that the node and the tape content remain unchanged after the final node is reached and the machine halts. 
The final node of $M$ is the output node labelled $N$.
We require the tape to be of the form $\dots z10\textbf{.} x_1\dots$, where $z \neq 1$ and $x_1 \neq 0$ in order to comply with our definition of the output readout of a $\BSSK$ machine. 
This yields the formula 
\begin{alignat*}{1}
    \varphi_{\text{output}} \dfn (Q_N(\bar{t})=1) \bto (&Q_N(\bar{t}')=1 \land V(\bar{t}', \bar{p})= V(\bar{t}, \bar{p}) \land V(\bar{t}, \bar{p}_{-2}) \neq 1 \land \\
    &V(\bar{t}, \bar{p}_{-1}) =1 \land V(\bar{t}, \bar{p}_1) \neq 0).
\end{alignat*}
The proof of correctness is then straightforward.
\end{proof}

\section{A Version of Cook's Theorem}\label{Cook}
Similar to the classical setting, we can now prove a version of Cook's theorem over semirings, as in Definition \ref{def:satk}.
That is, we show that the satisfiability problem of $\PLK$ is also complete for the class of non-deterministic polynomial time in the semiring setting.
Note that, as mentioned in \Cref{rem:encoding}, we use encodings for formulae and assignments instead of $K$-interpretations.
We will also overload the notation $\PLK$ to denote sets of such formulae, assignments $\{s \mid \evaluate{\varphi}{s} \neq 0\}$, and encodings of assignments $\{\langle s \rangle \mid \evaluate{\varphi}{s} \neq 0\}$.
\begin{lemma}\label{lem:satkinnpk}
    $
        \SATK \in \NPK.
    $
\end{lemma}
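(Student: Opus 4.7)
The plan is to exhibit a non-deterministic $\BSSK$ machine $M$ that, on input the encoding $\langle\varphi\rangle$ of a $\PLK$-formula, guesses an assignment to the propositional literals occurring in $\varphi$ and then deterministically evaluates $\evaluate{\varphi}{s}$, accepting iff the result is non-zero.

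First, I would fix the encoding. By Remark~\ref{rem:encoding}, syntactic items (connectives, propositional symbols, the markers $=$ and $\leq$, parentheses) are encoded as binary strings over $\{0,1\}\subseteq K$, while constants from $K$ occurring in $\varphi$ appear verbatim in the encoding. So the length of $\langle\varphi\rangle$ is polynomial in the size $n=|\varphi|$, and the number of distinct literals $p,\lnot p$ appearing in $\varphi$ is at most $n$. The guess will be a tuple $\bar{y}\in K^*$ of length at most $n$, listing the guessed values $s(p), s(\lnot p)$ for each literal appearing in $\varphi$; by Definition~\ref{def:nondeterministic_BSS_machine} the modified input mapping places the pair $(\langle\varphi\rangle,\bar{y})$ on the tape in polynomial time.

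Next, I would implement a straightforward recursive evaluation on the $\BSSK$ machine. Because $\PLK$ has no quantifiers, a standard left-to-right parse of $\langle\varphi\rangle$ yields a subformula tree of size $O(n)$; the BSS machine can, in polynomial time, produce the value $\evaluate{\varphi}{s}$ by processing each node: a literal is looked up among the guessed values $\bar{y}$; a constant $c\in K$ is read directly from the tape; a conjunction and a disjunction are computed with one multiplication, resp.\ one addition node; an atom $\psi\circ\chi$ with $\circ\in\{=,\leq\}$ is computed by first recursively evaluating $\psi,\chi$ and then using a branch node to assign $1$ or $0$. Each subformula contributes a constant number of computation/branch nodes, so the overall computation time of $M$ on $(\langle\varphi\rangle,\bar{y})$ is polynomial in $n$. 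Finally, one branch node compares the result with $0$ to decide whether $f_M(\langle\varphi\rangle,\bar{y})\neq 0$.

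For correctness, by the semantics of $\PLK$ the recursive procedure computes exactly $\evaluate{\varphi}{s}$ where $s$ is the assignment encoded by $\bar{y}$. Hence there exists a guess $\bar{y}$ with $f_M(\langle\varphi\rangle,\bar{y})\neq 0$ iff there exists an assignment $s$ with $\evaluate{\varphi}{s}\neq 0$, i.e.\ iff $\varphi\in\SATK$. No step presents a real obstacle; the only care needed is to confirm that the literal-lookup step can be wired with $O(n)$ branch nodes to dispatch on the binary index of the literal, which is routine given the input mapping. Together this places $\SATK$ in $\NPK$.
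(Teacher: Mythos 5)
Your proof is correct and follows essentially the same route as the paper's: guess an encoding of an assignment, deterministically evaluate $\evaluate{\varphi}{s}$ in polynomial time on the $\BSSK$ machine, and accept iff the result is non-zero. You simply spell out the evaluation procedure and the encoding details that the paper leaves implicit; the only superfluous step is the final branch node comparing the result with $0$, since acceptance is already defined by $f_M(\langle\varphi\rangle,\bar{y})\neq 0$, so the machine can just output the computed value.
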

\begin{proof}
    Let $M$ be a non-deterministic $\BSSK$ machine with input $\langle \varphi \rangle$, the encoding of a $\PLK$ formula, that guesses $\langle s \rangle$, the encoding of an assignment for $\varphi$.
    This machine $M$ then computes $f_M(\langle \varphi \rangle, \langle s \rangle) = \evaluate{\varphi}{s}$ in polynomial time, bounded by the size of the encoding $|\langle \varphi \rangle|$ of the formula $\varphi$.

    Now, we have that $\varphi \in \SATK$ if and only if there exists an assignment $s$ such that $\evaluate{\varphi}{s} \neq 0$ if and only if there is a guess $\langle s \rangle$ such that $f_M(\langle \varphi \rangle, \langle s \rangle) \neq 0$.
\end{proof}
Given the containment of $\SATK$ in $\NPK$, it is left to show that every language $L \in \NPK$ is polynomial time many-to-one reducible to $\SATK$ as by Definition \ref{def:npk_complete}.
As the proof follows the same idea as the proof of Fagin's theorem for semirings (Theorem \ref{thm:fagin}), we only present the idea of the proof here. 
\begin{restatable}{theorem}{cook} \label{thm:cook}
    $\SATK$ is $\NPK$-complete.
\end{restatable}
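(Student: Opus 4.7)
The plan is to mirror the construction from the proof of Theorem \ref{thm:fagin}, but at the propositional level. Membership, $\SATK \in \NPK$, is already established by Lemma \ref{lem:satkinnpk}, so what remains is $\NPK$-hardness. Given an arbitrary language $L \in \NPK$ decided by a non-deterministic $\BSSK$ machine $M$ running in time $T(n) \leq n^z - 1$, I would build, for each input $\bar{x} \in K^*$ of length $n$, a $\PLK$ formula $\varphi_{\bar{x}}$ such that $\bar{x} \in L$ iff $\varphi_{\bar{x}}$ is satisfiable, and show that the map $\bar{x} \mapsto \varphi_{\bar{x}}$ is itself computable in polynomial time by a $\BSSK$ machine in the sense of Definition \ref{def:leqpmk}.

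First, since $n$ is fixed, I would introduce propositional variables $v_{t,p}$ for each time step $t \in \{0, \dots, T(n)\}$ and each tape position $p$ in the reachable polynomial-size window around the origin, together with variables $q_{s,t}$ for each node $s$ of $M$ and each time step $t$. The variables $v_{t,p}$ are meant to carry the $K$-value of the tape at position $p$ at time $t$, while the $q_{s,t}$ flag in which node $M$ currently resides; both kinds of variables formally range over $K$. Because the number of such $(t,p)$ and $(s,t)$ pairs is polynomial in $n$, the universal quantifiers used in the $\FOK$ formula of Theorem \ref{thm:fagin} can be expanded into explicit large conjunctions.

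Next, I would set $\varphi_{\bar{x}} \dfn \varphi_{\text{input}} \land \varphi_{\text{arithmetic}} \land \varphi_{\text{branching}} \land \varphi_{\text{constraint}} \land \varphi_{\text{output}}$, where each conjunct is the propositional analogue of the corresponding subformula in the Fagin-style proof. In $\varphi_{\text{input}}$, the symbols of $\bar{x}$ are hardwired as $K$-constants into the appropriate variables $v_{0,p}$, while the variables at positions reserved for the guess are left free, so that any satisfying assignment to them plays the role of the non-deterministic guess of $M$. The remaining subformulas are obtained from the Fagin construction by replacing the quantified pairs $(\bar{t},\bar{p})$ with explicit finite conjunctions and by using the formula (in)equality atoms of $\PLK$ in place of the corresponding $\FOK$ ones; addition and multiplication nodes translate directly into $\lor$ and $\land$ on the relevant $v_{t,p}$, branching nodes use the $\leq$ atom on them, and constraint/output blocks are copied verbatim.

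The main obstacle is to verify that the reduction is indeed computable in polynomial time by a $\BSSK$ machine, and that its output has polynomial size. Each subformula has polynomial size because $N$ and $z$ are fixed, the index set of time/position pairs is polynomial in $n$, and the constants of $M$ and the entries of $\bar{x}$ appear only as hardwired semiring values. The machine producing $\varphi_{\bar{x}}$ needs only to output the syntactic skeleton of the encoding while inserting the $K$-values from $\bar{x}$ and from the description of $M$ at the correct places, which requires no nontrivial semiring arithmetic and hence runs in polynomial time. Correctness then follows by an induction on $t$ analogous to the one implicit in Theorem \ref{thm:fagin}: an assignment satisfies $\varphi_{\text{input}} \land \varphi_{\text{arithmetic}} \land \varphi_{\text{branching}} \land \varphi_{\text{constraint}}$ iff its values describe a legal trajectory of $M$ on $\bar{x}$ together with some guess $\bar{x}'$, and $\varphi_{\text{output}}$ then forces this trajectory to be accepting, yielding $\evaluate{\varphi_{\bar{x}}}{s} \neq 0$ precisely when $\bar{x} \in L$.
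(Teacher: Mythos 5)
Your proposal is correct and follows essentially the same route as the paper: membership via Lemma~\ref{lem:satkinnpk}, and hardness by unfolding the Fagin-style construction of Theorem~\ref{thm:fagin} into polynomially many propositional variables $v_{t,p}$ and $q_{t,s}$, hardwiring the input as $K$-constants, leaving the guess positions free, and conjoining input/transition/constraint/output blocks. The only (welcome) addition is that you explicitly argue the reduction map is $\BSSK$-computable in polynomial time, a point the paper leaves implicit.
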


\begin{proof}
As in the proof of Theorem \ref{thm:fagin}, we construct a formula $\varphi$ that describes the computation of a non-deterministic $\BSSK$ machine $M$ on some arbitrary input $x$, such that $\varphi$ is satisfiable if and only if $M$ accepts $x$. 
To do so, we define propositions (instead of relations) that describe the content of the tape and the current computation node at a given time:
\begin{itemize}
        \item $v_{t,p}$ describes the content of the tape at time $t$ and position $p$
        \item $q_{t,s} = \begin{cases}
            1, & \text{if $M$ is at node $s$ at time $t$}\\
            0, & \text{else}
        \end{cases}$
\end{itemize}
From there on, we follow the proof of Theorem \ref{thm:fagin}, constructing formulae to verify that the machine is correctly initialised and that the values at computation nodes change according to the rules of the machine.
For instance, the following formula describes that the machine is initially at the first computation node and nowhere else:
\[ \varphi^0_q \coloneqq (q_{0, 1} = 1) \land \bigwedge_{s \in S\setminus\{ 1\}} q_{0,s} = 0 \]
We concatenate this formula together with similar checks to ensure a correct computation and termination, which gives us the desired $\PLK$-formula $\varphi$. 
The full proof can be found in Appendix \ref{app:cook}.
\end{proof}

\section{Non-arithmetic Reductions and the Existential Theory of K}\label{ETKand}

In this section, we define novel non-arithmetic reductions and use them to show that the existential
first-order theory of a semiring $K$ is complete for the Boolean part of the complexity class $\NPK$.
We also generalize the aforementioned result to $\BSSK$ machines that operate on subsets of $K$.
For this, we turn our attention to $\BSSK$ machines $M$ that correspond to
polynomial-time computable input-output functions
of the forms $f_M \colon X^* \to K^*$ and $f_M \colon X^* \to X^*$ for some subsets $X \subseteq K$.
However, without loss of generality,
we may assume that these machines run in polynomial time for all inputs in $K^*$.
This simplifying observation is an application of using a binary counter to
restrict the number of computation steps before always halting. 
Missing proofs of this section can be found in \Cref{app:etk}.

\begin{restatable}{lemma}{timer} \label{lem:timer}
Let $X \subseteq K$ and let $f \colon X^* \to K^*$ be computed by a $\BSSK$ machine $M$
that runs in polynomial time on inputs in the set $X^*$.
Then there is a $\BSSK$ machine $M'$ such that
$f_{M'} \upharpoonright X^* = f$ and that $M'$ runs in polynomial time on all inputs in $K^*$.
Furthermore, the machine $M'$ is limited to the machine constants of $M$ together with $0$ and $1$.
\end{restatable}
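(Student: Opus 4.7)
The plan is to build $M'$ by instrumenting $M$ with an explicit computation-step counter that forces halting after $p(n)$ simulation steps, where $p$ is a polynomial bounding the running time of $M$ on inputs in $X^*$. Since $M$ is already well-defined on all inputs in $K^*$, the only issue to fix is runaway computation on inputs outside $X^*$: we intercept $M$'s transition function, and on each simulated step of $M$ we also advance the counter, emitting a default output (say the all-zero state) and jumping to the output node if the counter has been exhausted. This guarantees $M'$ runs in polynomial time on every input in $K^*$, while agreeing with $M$ on $X^*$, where the timer never expires.

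The key design choice is how to realize the counter using only the constants of $M$ together with $0$ and $1$. I would avoid any genuine arithmetic on the counter (which could misbehave in idempotent semirings such as $\mathbb{B}$) and instead use a \emph{positional} counter. Observe that $g_I$ already stores $n$ in unary in the negative coordinates, followed by a $0$. The plan is to maintain $c$ auxiliary pointer positions $\bar{p}_1,\dots,\bar{p}_c$ on the tape, where $c$ is a constant with $p(n)\leq n^{c}$ for all sufficiently large $n$ (small $n$ are handled by hard-coded case analysis using only the constants $0$ and $1$). Each pointer is advanced by a shift node; a branch node then inspects the cell at the pointer and compares it against a cell that has been assigned the constant $0$. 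When a pointer's cell becomes $0$, the corresponding ``digit'' has reached $n$, at which time we reset the pointer (by shifting back to the starting position, which is located via constant markers $1$ placed during initialization) and advance the next pointer. When the outermost pointer overflows, the clock has reached $n^{c}\geq p(n)$, and $M'$ jumps to its output node. Between consecutive clock updates, we execute one simulated step of $M$, which is possible because the node structure of $M$ is fixed and can be inlined into the nodes of $M'$.

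Verifying correctness amounts to checking three routine facts: (i) on inputs in $X^*$, $M$ halts within $p(n)$ steps, so the clock never forces an early halt and thus $f_{M'}\upharpoonright X^* = f$; (ii) on every input in $K^*$, the total number of steps of $M'$ is the sum of the simulated steps of $M$ (at most $p(n)$) and the clock-maintenance overhead (a constant amount per simulated step), hence polynomial in $n$; and (iii) the construction introduces no new machine constants beyond $0$ and $1$, because shift, branch and constant-assignment nodes suffice for the pointer mechanism and only $0$ and $1$ are ever assigned or compared to.

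The main obstacle I anticipate is not the simulation itself but the semiring-agnostic implementation of the counter: in a semiring where addition is idempotent, a naive ``add $1$ each step'' counter collapses, so one must commit to a positional construction and carefully manage the shifts that locate each pointer's home position. Once the pointer bookkeeping is set up with constant-many auxiliary cells pre-initialized to $0$ and $1$ at the start of $M'$'s run, the remaining arguments are bookkeeping, and the lemma follows.
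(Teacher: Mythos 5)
Your proposal is correct in its essential idea and reaches the same conclusion, but it takes a genuinely different implementation route from the paper. The paper first passes to the $K$-Turing machines of \cite{Jelia25}, interleaves the simulated tape of $M$ with a second track on every other cell, and runs a \emph{binary} counter on that track using the symbols $0$ and $1$, initialized to $t(n)$ and decremented symbolically after each simulated step; the finite tape alphabet and finite state set of the $K$-TM model make the decrement a routine string operation. You stay in the raw $\BSSK$ formalism and build a base-$n$ odometer with constantly many digits, each digit realized as a pointer walking along the unary encoding $1^n$ that $g_I$ already places in the negative coordinates. Both constructions share the one insight that actually matters here --- the counter must not use semiring addition, since in an idempotent semiring such as $\mathbb{B}$ a ``$+1$ each step'' clock collapses --- and both respect the constraint on machine constants, so your argument is sound in spirit. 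What the paper's route buys is that the two-track encoding automatically keeps the clock at a fixed, recoverable offset from the simulated tape even as $M$ performs shifts (a shift of $M$ becomes two shifts of $M'$). Your write-up does not address this point: in the $\BSSK$ model, branch and computation nodes only access \emph{fixed} coordinates, so every pointer dereference must be realized by global tape shifts, and $M$'s own shift nodes move the data relative to any absolutely-positioned clock region. Without an interleaving (or equivalent two-track) device, $M'$ cannot relocate its odometer after simulating a shift of $M$ without solving another counting problem. This is a fixable bookkeeping issue rather than a conceptual gap, but it is precisely the detail on which a naive layout of ``clock here, simulated tape there'' fails, so it should be made explicit before the lemma can be considered fully proved along your lines.
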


\begin{definition}\label{def:npkx}
Let $\{0, 1\} \subseteq X \subseteq K$.
The class $\NPKXX$ consists of all languages $L \subseteq X^*$ for which
there exists a $\BSSK$ machine $M$ that runs in polynomial time
such that $M$ decides $L$ non-deterministically among inputs in $X^*$
and uses only machine constants in the set $X$.
That is, for every input $\bar{x} \in X^*$ it holds that $\bar{x} \in L$, if and only if for some
guess $\bar{x}' \in K^*$ it is true that $f_M(\bar{x}, \bar{x}') \neq 0$.
\end{definition}

\begin{definition}
For $L \subseteq K^*$ and $L' \subseteq X^*$, we write
$L' \leqpmkxx L$ if there is a function $f \colon X^* \to K^*$ that is computable in
polynomial time by a $\BSSK$ machine using only machine constants in the set $X$
such that for all $\bar{x} \in X^*$ it holds that $\bar{x} \in L'$ if and only if $f(\bar{x}) \in L$.
\end{definition}

In particular, $\NPKX{K} = \NPK$, and for every $L, L' \subseteq K^*$ the conditions
$L' \leqpmk L$ and $L' \leqpmkx{K} L$ are equivalent.
We do not assume that the subsets $X \subseteq K$ we consider
are closed under the arithmetic operations of the semiring $K$, but instead we focus
on reductions based on $\BSSK$ machines that use arithmetic operations only for copying
arbitrary elements of the semiring without changing them, and for assigning constant values.

\begin{definition}\label{def:nonarithmetic}
A $\BSSK$ machine $M$ is said to be \emph{non-arithmetic}, if for every input $\bar{x} \in K^*$
it holds that whenever an arithmetic computation node is reached during the computation,
this computation node corresponds to either
a constant assignment or an arithmetic operation $a \star b$ for $\star \in \{+, \cdot\}$
in such a manner that at least one of the operands $a$ and $b$
is the neutral element of the operation $\star$ in the semiring $K$.
Furthermore, a computable function $f$ is \emph{non-arithmetic},
if it is computed by some non-arithmetic machine.
For $X \subseteq K$ and $P, Q \subseteq X^*$,
a reduction $Q \leqpmkxx P$ is called \emph{non-arithmetic}, if the computable function
corresponding to the reduction is non-arithmetic.
\end{definition}

\begin{lemma}\label{lem:stillinxstar}
Let $\{0, 1\} \subseteq X \subseteq K$ and let $M$ be a non-arithmetic $\BSSK$ machine that only uses machine
constants in the set $X$. Then, for every input $\bar{x} \in X^*$, during the computation only
those elements of $K$ that are in the subset $X$ may appear, and it holds that $f_M(\bar{x}) \in X^*$.
\end{lemma}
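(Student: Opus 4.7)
The plan is to prove the statement by induction on the number of computation steps performed by $M$ on the input $\bar{x} \in X^*$. The invariant to maintain is that at every stage of the computation, every coordinate of the current state that differs from $0$ lies in $X$, and moreover every coordinate of the state (including zeros) belongs to $X$ since $0 \in X$ by hypothesis.

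For the base case, one checks the configuration produced by the input mapping $g_I$. It contains the input coordinates $x_1, \dots, x_n$, which are in $X$ by assumption; the unary length marker, which consists of $0$'s and $1$'s, both in $X$; and $0$'s elsewhere. Hence the invariant holds initially.

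For the inductive step, assume the state after $n$ computation steps has all coordinates in $X$, and examine each of the five node types from \Cref{def:BSSK}. Shift nodes merely permute coordinates, so the set of values present is unchanged. Branch nodes do not modify the state. Constant assignment nodes place the machine constant $c_s$ into some coordinate, and by hypothesis $c_s \in X$. For addition nodes, the resulting value is $x_j + x_k$, and by the non-arithmeticity condition at least one of $x_j, x_k$ equals $0$; thus the result coincides with the other operand, which already lies in $X$ by the inductive hypothesis. For multiplication nodes, the same argument with $1$ in place of $0$ shows that the result again equals one of the operands and hence lies in $X$. The output node does not alter the state. Thus the invariant is preserved.

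Once the machine reaches the output node $N$, every coordinate of the state is in $X$, so in particular the coordinates read out by $g_O$ are in $X$, giving $f_M(\bar{x}) \in X^*$. The main point is almost tautological: the non-arithmeticity condition is precisely engineered so that the two genuinely value-producing operations (addition and multiplication) degenerate to copying an existing operand, while the only other source of new values, namely constant assignment, is constrained by hypothesis to use constants in $X$. The only step that requires any care is verifying that the input mapping itself does not introduce values outside $X$, which is immediate from the assumption $\{0,1\} \subseteq X$.
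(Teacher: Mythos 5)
Your proof is correct and follows essentially the same route as the paper's: an induction on computation steps, observing that the input mapping only introduces elements of $\bar{x}$ together with $0$ and $1$, and that the non-arithmeticity condition forces every addition or multiplication to reproduce one of its operands, so only constant assignments (restricted to $X$ by hypothesis) can introduce values. The paper merely states this induction in two sentences; your write-out of the case analysis over the five node types is a faithful elaboration of the same argument.
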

\begin{proof}
    Since a $\BSSK$ machine can change the semiring elements in the state space of the machine only while executing an arithmetic computation node, using induction on the computation steps of a non-arithmetic $\BSSK$ machine yields the stated claim. 
    Note that $0$ is also used to fill the state space, and $1$ is used to encode the length of the input.
\end{proof}

In the case that $X$ is finite, the previous lemma leads to an alternative
approach of using ordinary Turing machines instead of $\BSSK$ machines.

\begin{restatable}{lemma}{nonarithmeticturing} \label{lem:nonarithmeticturing}
Let $X \subseteq K$ be a finite subset so that $\{0, 1\} \subseteq X$.
Let $P, Q \subseteq X^*$ and let $Q \leqpmkxx P$ be a non-arithmetic reduction.
Then we have $Q \leqpm P$, that is, there is a Turing machine with input alphabet $X$ computing a reduction from $Q$ to $P$ in polynomial time.
\end{restatable}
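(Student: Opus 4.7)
The plan is to directly simulate the given non-arithmetic $\BSSK$ machine $M$ that witnesses the reduction $Q \leqpmkxx P$ by an ordinary Turing machine $T$, with input alphabet $X$ and a tape alphabet extending $X$ by a few auxiliary symbols (such as a blank and a marker for the origin of the two-sided BSS tape). The essential prerequisite is Lemma~\ref{lem:stillinxstar}: on any input in $X^*$, every semiring element that ever appears in the state space of $M$ lies in the finite set $X$, so the full configuration of $M$ can be written as a finite word over an alphabet that $T$ can store on its tape.

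Next, I would describe the simulation node by node. Shift nodes become ordinary head movements of $T$, and the BSS input and output mappings translate to reading and writing fixed blocks of $0$'s and $1$'s in unary. Branch nodes call for comparing two tape cells under $=$ (and, if $K$ is ordered, under $\leq$); since both $X$ and the restrictions of these relations to $X$ are finite, the outcomes are hard-coded into the finite control of $T$ as a lookup table. The decisive step is the treatment of arithmetic computation nodes: by the definition of non-arithmeticity, every such node is either a constant assignment of some $c \in X$ (which $T$ executes by writing the symbol $c$) or an operation $a \star b$ with $\star \in \{+, \cdot\}$ in which one operand equals the neutral element of $\star$, so the result is simply the other operand and the node collapses to copying one tape cell into another. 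At no point does $T$ have to perform a genuine semiring computation.

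For the running-time bookkeeping, suppose $M$ runs in time $p(n)$. Every cell accessed during the simulation lies within distance $p(n)$ of the origin, and each BSS step accesses only a bounded (machine-dependent) number of cells around the current center, so each step can be simulated by $T$ using a polynomial number of head movements; summing over the $p(n)$ steps of $M$ yields a polynomial-time simulation overall. Correctness of $T$ then follows because its input-output behaviour agrees with $f_M$ on inputs in $X^*$, so the equivalence $\bar{x} \in Q \iff f_M(\bar{x}) \in P$ transfers to $T$, giving the desired reduction $Q \leqpm P$.

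The main conceptual step, and the only place where the hypotheses are genuinely used, is the reduction of every arithmetic computation node to either a constant assignment or a one-cell copy, via the non-arithmeticity of $M$ combined with Lemma~\ref{lem:stillinxstar}; the finiteness of $X$ then promotes the residual comparisons into a finite lookup table. The rest of the argument is a routine but careful encoding of the two-sided infinite BSS tape (with its marked origin, unary length prefixes and input/guess structure) onto a standard Turing machine tape, and I expect this bookkeeping to be the most tedious rather than the most difficult part of writing out the full proof.
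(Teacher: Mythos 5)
Your proposal is correct and takes essentially the same approach as the paper's proof: a direct step-by-step simulation of the non-arithmetic $\BSSK$ machine by a Turing machine, using \Cref{lem:stillinxstar} to confine all occurring semiring values to the finite set $X$, collapsing each arithmetic node to a constant assignment or a one-cell copy, and hard-coding the finitely many comparison outcomes on $X$ into the finite control. The paper organizes the same simulation into three explicit phases (input formatting, node-by-node simulation, output extraction), but the key ideas and the time analysis coincide with yours.
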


We give two definitions for completeness based on non-arithmetic reductions.
The first one is a non-arithmetic version $\NPKXX$-completeness,
and the other one is an even stronger form of non-arithmetic $\NPK$-completeness.

\begin{definition}
Let $\{0, 1\} \subseteq X \subseteq K$ and $L \subseteq X^*$. We say that $L$ is
\emph{$\NPKXX$-complete under non-arithmetic reductions}, if $L \in \NPKXX$
and if for every language
$L' \in \NPKXX$ there is a non-arithmetic reduction $L' \leqpmkxx L$.
\end{definition}

\begin{definition}
A language $L \subseteq K^*$ is said to be \emph{$\NPK$-complete under non-arithmetic reductions respecting machine constants}, if the membership $L \in \NPK$ is witnessed by
polynomial-time $\BSSK$ machine that only uses machine constants $0$ and $1$,
and if for every $L' \in \NPK$
and for every polynomial-time $\BSSK$ machine $M$ the following holds:
if $M$ decides $L'$ non-deterministically, then there is a polynomial-time non-arithmetic
$\BSSK$ machine $M'$ that is limited to only use
the machine constants of $M$ together with $0$ and $1$
in such a manner that for every $\bar{x} \in K^*$
it is true that $\bar{x} \in L'$ if and only if $f_{M'}(\bar{x}) \in L$.
\end{definition}

\begin{lemma}\label{lem:npkxcompletesub}
Let $\{0, 1\} \subseteq X \subseteq K$ and let $P \subseteq K^*$ be $\NPK$-complete
under non-arithmetic reductions that respect machine constants.
Then the problem $P \cap X^*$ is $\NPKXX$-complete under non-arithmetic reductions.
\end{lemma}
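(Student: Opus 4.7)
The plan is to verify the two halves of $\NPKXX$-completeness under non-arithmetic reductions: first that $P \cap X^* \in \NPKXX$, and second that every $L' \in \NPKXX$ admits a non-arithmetic reduction to $P \cap X^*$.

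For membership, I would take the polynomial-time $\BSSK$ machine $M_P$ which witnesses $P \in \NPK$; by assumption on $P$, such a witness exists using only the machine constants $0$ and $1$, both of which lie in $X$. Restricting attention to inputs in $X^*$, the same machine $M_P$ decides $P \cap X^*$ non-deterministically among inputs in $X^*$: for $\bar{x} \in X^*$, we have $\bar{x} \in P \cap X^*$ iff $\bar{x} \in P$ iff some guess $\bar{x}' \in K^*$ satisfies $f_{M_P}(\bar{x}, \bar{x}') \neq 0$. Since the machine constants used are in $X$, this verifies $P \cap X^* \in \NPKXX$ directly from Definition~\ref{def:npkx}.

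For hardness, let $L' \in \NPKXX$ be arbitrary, and let $M_0$ be a polynomial-time $\BSSK$ machine witnessing this fact, so $M_0$ uses only machine constants in $X$. Since $L' \subseteq X^* \subseteq K^*$, we also have $L' \in \NPK$ with the same witness $M_0$. Applying the assumed $\NPK$-completeness of $P$ under non-arithmetic reductions that respect machine constants to $L'$ and the specific machine $M_0$, we obtain a polynomial-time non-arithmetic $\BSSK$ machine $M'$ using only the machine constants of $M_0$ together with $0$ and $1$, such that for all $\bar{x} \in K^*$, $\bar{x} \in L'$ iff $f_{M'}(\bar{x}) \in P$. Because the machine constants of $M_0$ lie in $X$ and $\{0,1\} \subseteq X$, every machine constant of $M'$ lies in $X$.

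It remains to observe that the restriction $f_{M'} \upharpoonright X^*$ takes values in $X^*$. This is precisely where Lemma~\ref{lem:stillinxstar} applies: since $M'$ is non-arithmetic and uses only machine constants in $X$, for every $\bar{x} \in X^*$ we have $f_{M'}(\bar{x}) \in X^*$. Consequently, for $\bar{x} \in X^*$, the equivalence $\bar{x} \in L'$ iff $f_{M'}(\bar{x}) \in P$ becomes $\bar{x} \in L'$ iff $f_{M'}(\bar{x}) \in P \cap X^*$, exhibiting a non-arithmetic reduction $L' \leqpmkxx (P \cap X^*)$ as required. The argument is essentially bookkeeping: the only subtle point---which I expect to be the main obstacle in a careful write-up---is making sure that the reduction machine handed to us by the completeness hypothesis inherits precisely the right set of machine constants so that Lemma~\ref{lem:stillinxstar} actually applies, and this is guaranteed by the ``respecting machine constants'' clause in the hypothesis combined with $\{0,1\} \subseteq X$.
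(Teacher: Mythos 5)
Your overall strategy matches the paper's, and the membership half and the use of Lemma~\ref{lem:stillinxstar} at the end are exactly right. However, there is a genuine gap in the hardness half, at the step ``Since $L' \subseteq X^* \subseteq K^*$, we also have $L' \in \NPK$ with the same witness $M_0$.'' This is not true in general: Definition~\ref{def:npkx} only requires $M_0$ to decide $L'$ non-deterministically \emph{among inputs in $X^*$}, so on inputs in $K^* \setminus X^*$ the machine $M_0$ may accept strings that are not in $L'$ (recall $L' \subseteq X^*$, so every such accepted string is outside $L'$). Consequently $M_0$ need not decide $L'$ non-deterministically in the sense required by Definition~\ref{def:nondeterministic_BSS_machine}, and the precondition of the ``completeness respecting machine constants'' hypothesis --- ``if $M$ decides $L'$ non-deterministically, then \dots'' --- is not verified for the pair $(L', M_0)$. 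As written, you cannot invoke the hypothesis to obtain the reduction $M'$ with $\bar{x} \in L' \Leftrightarrow f_{M'}(\bar{x}) \in P$ for all $\bar{x} \in K^*$.

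The repair is short and is precisely what the paper does: let $L''\subseteq K^*$ be the language that $M_0$ decides non-deterministically among \emph{all} inputs in $K^*$; then $L'' \in \NPK$ with witness $M_0$, $L'' \cap X^* = L'$, and the completeness hypothesis applied to $(L'', M_0)$ yields a non-arithmetic $M'$ with machine constants in $X$ such that $\bar{x} \in L'' \Leftrightarrow f_{M'}(\bar{x}) \in P$ for all $\bar{x} \in K^*$. Restricting to $\bar{x} \in X^*$ and using Lemma~\ref{lem:stillinxstar} exactly as you do then gives $\bar{x} \in L' \Leftrightarrow f_{M'}(\bar{x}) \in P \cap X^*$. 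So the conclusion stands, but the intermediate claim must be routed through $L''$ rather than $L'$.
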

\begin{proof}
By \Cref{def:npkx} and the assumed witness for the membership $P \in \NPK$,
it holds that $P \cap X^* \in \NPKXX$.
For the claimed $\NPKXX$-hardness under non-arithmetic reductions, let $Q \in \NPKXX$ be arbitrary.
Let $M$ be some polynomial-time $\BSSK$ machine that non-deterministically decides $Q$ among
inputs in $X^*$ and that only uses machine constants in the set $X$.
Let $Q' \subseteq K^*$ be the language that $M$ decides among inputs in $K^*$.
In particular, $Q' \cap X^* = Q$.
By assumption,
there is a non-arithmetic $\BSSK$ machine $M'$ that runs in polynomial time and uses only
machine constants in the set $X$ so that for every $\bar{x} \in K^*$, we have
$\bar{x} \in Q'$ if and only if $f_{M'}(\bar{x}) \in P$.
Then, it follows from \Cref{lem:stillinxstar}, that for every $\bar{x} \in X^*$
it holds that $\bar{x} \in Q$ if and only if $f_{M'(\bar{x})} \in P \cap X^*$.
Therefore, $P$ is $\NPKXX$-complete under non-arithmetic reductions.
\end{proof}

For the next result, let $\SATKflat$
be the restriction of $\SATK$ to those propositional formulae
that do not have nested formula (in)equalities.
For this definition, we consider the shorthand notations for the negated formula inequalities
$\neq$ and $\nleq$ to be single logical symbols.

\begin{theorem}\label{thm:satkrespect}
The problems $\SATK$ and $\SATKflat$ are $\NPK$-complete under non-arithmetic reductions
that respect machine constants.
\end{theorem}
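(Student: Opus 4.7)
The plan is to refine the proof of Theorem~\ref{thm:cook}, keeping careful track both of the machine constants used by the reduction and of the syntactic shape of the formulae it produces.

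For membership, I would observe that the non-deterministic $\BSSK$ machine of Lemma~\ref{lem:satkinnpk} can be implemented using only the machine constants $0$ and $1$: given an input formula and a guessed assignment, it traverses the syntactic symbols of the encoding (which live in $\{0,1\}$), substitutes the guess into the propositional letters, and evaluates the resulting expression with the semiring operations applied to elements that are already present in the input or in the guess; no additional semiring constants are needed. The same machine witnesses the membership $\SATKflat \in \NPK$ as well, since $\SATKflat \subseteq \SATK$.

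For hardness, my plan is to revisit the reduction in the proof of Theorem~\ref{thm:cook}. Given a non-deterministic polynomial-time $\BSSK$ machine $M$ deciding $L'\in\NPK$, the reduction on input $\bar{x}$ outputs a formula $\varphi_{\bar{x}}$ that encodes an accepting computation of $M$ on $\bar{x}$. The only semiring elements that actually appear as constants in $\varphi_{\bar{x}}$ are the bits $0,1$ of the syntactic encoding, the constants $c_s$ associated with the constant-assignment nodes of $M$, and the input entries $x_1,\dots,x_n$. The reduction writes each $c_s$ using its own constant-assignment nodes, thereby using exactly the machine constants of $M$ together with $0$ and $1$; it writes the syntactic bits using $0$ and $1$; and it copies each $x_i$ into the output by executing $x_i + 0$, which is admissible in a non-arithmetic machine because $0$ is the additive neutral element. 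All remaining steps—shift operations, branching, and bookkeeping—involve only shift and branch nodes together with constant-$0$/$1$ assignments, so the overall reduction is non-arithmetic in the sense of Definition~\ref{def:nonarithmetic} and respects machine constants.

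For $\SATKflat$, the additional task is to arrange that no formula (in)equality inside $\varphi_{\bar{x}}$ is nested within another. My plan is to replace every use of the Boolean macros of Definition~\ref{def:bool} in the Cook-style construction by a flat equivalent: an implication of the form ``$(q_{t,s}=1) \bto \psi$'' becomes ``$(q_{t,s}=0) \lor \psi$'' whenever $\psi$ is already a flat (in)equality such as ``$v_{t+1,p_{i_s}} = v_{t,p_{j_s}} \land v_{t,p_{k_s}}$'' or ``$v_{t+1,p_{i_s}} = c_s$''. Because $K$ is positive and every atomic comparison evaluates in $\{0,1\}$, the semiring operations $\land$ and $\lor$ applied at the top level correctly reflect classical conjunction and disjunction on these values, so satisfiability of the flattened formula is equivalent to consistency of the encoded computation. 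I expect the main obstacle to be a careful case analysis verifying that each of the subformulae $\varphi_{\text{input}}$, $\varphi_{\text{arithmetic}}$, $\varphi_{\text{branching}}$, $\varphi_{\text{constraint}}$, and $\varphi_{\text{output}}$ from the Fagin-style construction can be flattened in this way without reintroducing nesting. Since this rewriting affects only the \emph{output} of the reduction and not the reduction machine's own internal computation, neither the non-arithmetic character nor the machine-constant budget established in the previous paragraph is disturbed.
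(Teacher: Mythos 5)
Your proposal is correct and follows essentially the same route as the paper: membership via the constant-$0$/$1$ verifier from Lemma~\ref{lem:satkinnpk}, and hardness by re-examining the Cook-style reduction of Theorem~\ref{thm:cook}, checking that the reduction machine is non-arithmetic and uses only $M$'s constants together with $0$ and $1$, and flattening the Boolean implication macros into top-level disjunctions. Your replacement $(q_{t,s}=0)\lor\psi$ is in fact the sensible reading of the paper's replacement rule for $\phi\bto\psi$ (which relies on positivity and on the constraints pinning the $q$-variables to $\{0,1\}$), and your explicit note that input values are copied via $x_i+0$ makes concrete a detail the paper leaves implicit.
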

\begin{proof}
Although not explicitly stated,
the claims in the statement have been demonstrated in the proofs of \Cref{lem:satkinnpk,thm:cook}.
In the proof of \Cref{lem:satkinnpk}, machine constants for elements other than $0$ and $1$
are not used when a guess for a satisfying assignment is verified.
The proof of \Cref{thm:cook} shows that, in addition to $\SATK$ being $\NPK$-complete,
also $\SATKflat$ is $\NPK$-complete. Even though we have used Boolean connectives in
the sense of \Cref{def:bool}, the use of nested formula (in)equalities can be eliminated.
Specifically, each subformula of the form $\phi \bto \psi$ can be replaced with
$\phi \neq 0 \lor \psi$ without changing the satisfiability of the resulting formula.
The corresponding reductions can be accomplished using non-arithmetic machines
that satisfy the claimed condition on respecting machine constants.
\end{proof}

Next we consider the semiring $K$ as a structure
in the context of ordinary first-order logic.
   \emph{The existential theory of $K$} is defined as the set of existential sentences of first-order logic (in the vocabulary $\{+,\cdot,0,1\}$) that are true in $K$, i.\,e., as the set $\ETK=\{\,\phi \in \FO \mid K\models \phi\,\}$, where the standard Tarskian semantics of truth is applied.
   To be more precise, the syntax of this logic is defined otherwise as in \Cref{def:fok}, but we leave out universal quantification and formula (in)equalities, and only allow the constants $0$ and $1$.
   Then, $\EK$ is defined as the complexity class that consists of all those Boolean languages that have a polynomial-time many-one reduction to $\ETK$; in short,
   $\EK\coloneqq \{\,L\subseteq \{0,1\}^*\mid L \leqpm \ETK\,\}$.

Additionally,
For every subset $X \subseteq K$ so that $\{0, 1\} \subseteq X$,
we define $\ETKXX$ as the set of first-order sentences that are true in the structure
that is obtained from $K$ by interpreting each semiring element
in the set $X$ as an individual constant.
In particular, $\ETKX{\{0, 1\}} = \ETK$.

\begin{theorem}\label{thm:etkxcomplete}
$\ETKXK$ is $\NPK$-complete under non-arithmetic reductions that respect machine constants.
\end{theorem}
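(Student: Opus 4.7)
The plan is to establish $\NPK$-membership and $\NPK$-hardness (under non-arithmetic reductions respecting machine constants) separately, with the hardness part composing a new non-arithmetic reduction with the completeness statement of Theorem \ref{thm:satkrespect}.

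For membership, I would describe a polynomial-time non-deterministic $\BSSK$ machine that uses only the machine constants $0$ and $1$ and, given an encoded existential sentence $\phi = \exists x_1\cdots\exists x_m\,\psi(\bar x)$, guesses witnesses $\bar a\in K^m$ and then deterministically evaluates the quantifier-free matrix $\psi(\bar a)$. This evaluation reduces to computing a bounded number of polynomial terms with addition and multiplication nodes and testing their equalities with branch nodes. By the encoding convention of Remark \ref{rem:encoding}, every constant symbol for an element of $K$ that appears in $\phi$ is supplied as part of the input, so the machine itself needs no semiring constants beyond $0$ and $1$.

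For hardness, by Theorem \ref{thm:satkrespect} it suffices to give a polynomial-time non-arithmetic reduction from $\SATKflat$ to $\ETKXK$ that uses only $0$ and $1$ as machine constants; composing it with the reduction from that theorem will then yield the desired hardness (the composition of two non-arithmetic reductions is non-arithmetic, and the resulting machine uses only the constants of the original witness machine together with $0$ and $1$). Given a flat $\PLK$-formula $\varphi$ with propositional variables $p_1,\ldots,p_n$, I introduce fresh first-order variables $y_i, y_i'$ meant to witness the values $s(p_i)$ and $s(\neg p_i)$. Because $\varphi$ has no nested formula (in)equalities, evaluating any subformula that is free of (in)equality atoms naturally yields a polynomial term $t(\bar y)$ over $K$. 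I then translate $\varphi$ inductively into an existential first-order formula $\tau(\varphi)$ over $\{+,\cdot,0,1\}$ enriched with constant symbols for the semiring constants occurring in $\varphi$, sending a literal $p_i$ (resp.\ $\neg p_i$) to $y_i\neq 0$ (resp.\ $y_i'\neq 0$), a semiring constant $c$ to the FO sentence $c\neq 0$, an (in)equality atom $\alpha\circ\beta$ to $t_\alpha\circ t_\beta$ (with $\leq$ handled via the natural-order identity $a\leq b\leftrightarrow \exists z\,(a+z=b)$ when appropriate), and mapping semiring disjunction/conjunction to the logical connectives $\lor/\land$. The last step is the crux of correctness and is justified by positivity of $K$, which ensures $\evaluate{\varphi_1\lor\varphi_2}{s}\neq 0$ iff some $\evaluate{\varphi_i}{s}\neq 0$ and $\evaluate{\varphi_1\land\varphi_2}{s}\neq 0$ iff both $\evaluate{\varphi_i}{s}\neq 0$. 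Prepending existential quantifiers over all $y_i, y_i'$ yields $\tau(\varphi)$, and $\varphi\in\SATKflat$ iff $K\models\tau(\varphi)$ follows by structural induction.

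It remains to argue that the reduction is non-arithmetic. This is immediate because the transformation from $\varphi$ to $\tau(\varphi)$ is a purely syntactic manipulation of the encoded formula: semiring constants from $\varphi$ are copied verbatim into $\tau(\varphi)$ as constant symbols, without any arithmetic being performed on them, and all remaining bookkeeping manipulates only binary strings of $0$s and $1$s. Every arithmetic computation node of the reducing $\BSSK$ machine therefore executes either a constant assignment or an arithmetic operation involving the neutral element, in compliance with Definition \ref{def:nonarithmetic}. The main obstacle I anticipate is the careful bookkeeping needed for the positivity-based decomposition of semiring disjunction and conjunction in the translation, especially the interaction between semiring-valued atoms (literals and constants) and Boolean-valued (in)equality atoms; this is precisely the interaction that the flatness restriction of $\SATKflat$ is designed to keep manageable, so that no nested case analysis on Boolean truth values of inner (in)equalities is ever required.
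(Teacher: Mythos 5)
Your proposal is correct and follows essentially the same route as the paper: membership by guess-and-verify using only the machine constants $0$ and $1$, and hardness by composing with \Cref{thm:satkrespect} a non-arithmetic reduction $\SATKflat \leqpmk \ETKXK$ that maps each propositional literal to a fresh existentially quantified first-order variable, replaces $\land$ and $\lor$ inside formula (in)equalities by $\cdot$ and $+$, and replaces literals outside (in)equalities by $\neq 0$ atoms. Your additional remarks on the role of positivity for the outer connectives and on expressing $\leq$ via the natural order are elaborations of steps the paper leaves implicit rather than a genuinely different argument.
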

\begin{proof}
The membership of $\ETKXK$ in $\NPK$ can be demonstrated in a similar manner to the proof
of \Cref{lem:satkinnpk}, which is, by guessing a satisfying assignment for the existentially
quantified variables and then verifying it in polynomial time. Only the elements $0$ and $1$
are used as machine constants.

For the hardness claim, note that
since the composition of two non-arithmetic reductions is a non-arithmetic reduction,
by \Cref{thm:satkrespect}
it is enough to show that there is
a non-arithmetic reduction $\SATKflat \leqpmk \ETKXK$ that does not use
other machine constants besides $0$ and $1$.
This reduction can be realized as follows:
Let $\phi$ be the propositional formula that is given as input.
Let $g$ be an injective function that maps
each negated and non-negated propositional literal of $\phi$
to a first-order variable.
Within formula (in)equalities, the connectives $\land$ and $\lor$ are replaced
with the arithmetic operations $\cdot$ and $+$, respectively,
and each propositional literal $l$ is replaced with $g(l)$.
Outside the scope of formula (in)equalities, each propositional literal $l$
is replaced with $g(l) \neq 0$.
Constants are kept unchanged.
At this point, we have a first-order formula $\psi$
without quantifiers. Then, let $\theta \coloneq \exists x_1 \dots \exists x_k \psi$,
where $x_1, \dots x_k$ are the newly introduced first-order variables of $\psi$.
Using arguments based on induction, it can be shown that
the propositional formula $\phi$ is satisfiable if and only if $K \models \theta$.
The first-order sentence $\theta$ can be constructed using a non-arithmetic $\BSSK$ machine.
\end{proof}

Since $\ETKXX = \ETKXK \cap X^*$ whenever $\{0, 1\} \subseteq X \subseteq K$,
the following corollary follows from \Cref{lem:npkxcompletesub,thm:etkxcomplete}.

\begin{corollary}
Let $\{0, 1\} \subseteq X \subseteq K$.
Then the problem $\ETKXX$ is $\NPKXX$-complete under non-arithmetic reductions.
\end{corollary}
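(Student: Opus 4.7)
The plan is to derive the corollary by a direct combination of \Cref{thm:etkxcomplete} and \Cref{lem:npkxcompletesub}, using the set-theoretic identity $\ETKXX = \ETKXK \cap X^*$ that is noted in the paragraph preceding the corollary.

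First, I would invoke \Cref{thm:etkxcomplete} to obtain that $\ETKXK$ is $\NPK$-complete under non-arithmetic reductions that respect machine constants. This is exactly the hypothesis required to apply \Cref{lem:npkxcompletesub} with $P \coloneq \ETKXK$. Applying that lemma immediately yields that the intersected language $\ETKXK \cap X^*$ is $\NPKXX$-complete under non-arithmetic reductions.

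Next, I would justify the identity $\ETKXX = \ETKXK \cap X^*$ and then transfer the completeness statement across this equality. By the definition given in the paragraph preceding \Cref{thm:etkxcomplete}, $\ETKXX$ is the set of first-order sentences in the vocabulary $\{+,\cdot,0,1\}$ enriched with an individual constant for each element of $X$ that are true in $K$, while $\ETKXK$ is the analogous set with constants drawn from all of $K$. Under the encoding convention of \Cref{rem:encoding}, where each semiring element is encoded as itself and syntactic material is encoded in binary over $\{0,1\} \subseteq X$, a sentence from $\ETKXK$ lies in $X^*$ precisely when all of its constant symbols are encoded by elements of $X$, that is, precisely when it is a sentence of the restricted vocabulary. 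Hence $\ETKXX = \ETKXK \cap X^*$, and the completeness of the right-hand side established in the previous step carries over verbatim to $\ETKXX$.

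There is essentially no substantive obstacle: both \Cref{thm:etkxcomplete} and \Cref{lem:npkxcompletesub} are already phrased in exactly the form needed, so the only real content of the proof is the bookkeeping argument identifying $\ETKXX$ with $\ETKXK \cap X^*$ under the stated encoding. If anything deserves attention, it is to verify that the non-arithmetic reductions produced by \Cref{lem:npkxcompletesub} indeed use only constants from $X$ (so that their outputs lie in $X^*$ by \Cref{lem:stillinxstar}); but this is already built into the conclusion of that lemma and therefore requires no further work.
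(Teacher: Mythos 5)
Your proposal is correct and follows exactly the route the paper takes: the corollary is obtained by combining \Cref{thm:etkxcomplete} with \Cref{lem:npkxcompletesub} via the identity $\ETKXX = \ETKXK \cap X^*$, which the paper states in the sentence preceding the corollary. Your additional remarks spelling out why that identity holds under the encoding of \Cref{rem:encoding} are a harmless elaboration of what the paper leaves implicit.
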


In the previous corollary, in the case that the subset $X$ is finite, we can use
\Cref{lem:nonarithmeticturing} to obtain a result about complexity classes
defined by using ordinary Turing machines.

\begin{corollary}
Let $X \subseteq K$ be a finite subset so that $\{0, 1\} \subseteq X$. Then $\ETKXX$
is complete for $\NPKXX$ under polynomial-time reductions using Turing machines
with input alphabet $X$.
In particular, $\EK = \NPKX{\{0, 1\}}$.
\end{corollary}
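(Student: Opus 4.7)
The plan is to chain together the previous corollary with \Cref{lem:nonarithmeticturing} and then specialize to $X=\{0,1\}$. From the previous corollary we already know that $\ETKXX$ is $\NPKXX$-complete under non-arithmetic reductions, and the whole point of introducing non-arithmetic reductions was precisely to make it possible, when $X$ is finite, to exchange the witnessing $\BSSK$ machine for an ordinary Turing machine with input alphabet $X$. So the first step is to take any $L \in \NPKXX$, obtain via the previous corollary a non-arithmetic reduction $L \leqpmkxx \ETKXX$, and apply \Cref{lem:nonarithmeticturing} (which requires finiteness of $X$) to convert this reduction into a polynomial-time Turing-machine reduction with input alphabet $X$. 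Membership $\ETKXX \in \NPKXX$ is already established in the proof of \Cref{thm:etkxcomplete}, since the guess-and-verify procedure there uses only $0$ and $1$ as machine constants; hence the completeness claim in the stronger sense asserted by the corollary is obtained.

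For the \emph{in particular} claim, the plan is to instantiate $X = \{0,1\}$. Then $\ETKX{\{0,1\}} = \ETK$ by definition, and the general part just established says that $\ETK$ is $\NPKX{\{0,1\}}$-complete under ordinary polynomial-time many-one reductions $\leqpm$ with input alphabet $\{0,1\}$. Combined with the definition $\EK \coloneqq \{\,L \subseteq \{0,1\}^* \mid L \leqpm \ETK\,\}$, the hardness half immediately gives $\NPKX{\{0,1\}} \subseteq \EK$. For the reverse inclusion $\EK \subseteq \NPKX{\{0,1\}}$, I would note that $\ETK \in \NPKX{\{0,1\}}$ by the membership part above, and that $\NPKX{\{0,1\}}$ is closed under $\leqpm$: any polynomial-time Turing machine over the alphabet $\{0,1\}$ can be simulated by a $\BSSK$ machine that uses only the machine constants $0$ and $1$, since branch nodes testing $x = y$ suffice to emulate the finite control over Boolean tape contents.

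There is no real obstacle here — the content has been pushed into the previously established results. The one place where care is needed is the verification that $\NPKX{\{0,1\}}$ is closed under Turing-machine reductions, but this is a routine simulation argument: encode each tape symbol of the Turing machine as the element $0$ or $1$ of $K$, and simulate the transition function using branch nodes (equality testing on $0$ and $1$) together with constant assignments, both of which fall within the computation nodes permitted by \Cref{def:BSSK} without introducing any machine constants other than $0$ and $1$.
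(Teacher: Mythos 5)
Your proposal is correct and follows essentially the same route the paper intends: combine the preceding corollary (completeness of $\ETKXX$ under non-arithmetic reductions) with \Cref{lem:nonarithmeticturing} to convert those reductions into polynomial-time Turing-machine reductions over the finite alphabet $X$, then specialize to $X=\{0,1\}$. Your explicit treatment of the inclusion $\EK \subseteq \NPKX{\{0,1\}}$ --- namely that $\NPKX{\{0,1\}}$ is closed under $\leqpm$ because a polynomial-time Turing machine over $\{0,1\}$ can be simulated by a $\BSSK$ machine using only the constants $0$ and $1$ (which requires $0 \neq 1$, a standing assumption of the paper) --- is the one step the paper leaves implicit, and it is the standard argument.
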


By using the notation of, e.\,g., \cite{10.1145/1007352.1007425},
our definition of $\NPKX{\{0, 1\}}$ coincides with the class that would be denoted by $\BP{\NPK^0}$,
the Boolean part of $\NPK^0$.
Here, $\NPK^0$ is the restricted version of the definition of $\NPK$ to $\BSSK$ machines
that are only allowed to use the semiring values $0$ and $1$ as machine constants.
Then, $\BP{\NPK^0}$ is defined as $\{\,L \cap \{0, 1\}^* \mid L \in \NPK^0\,\}$.

\section{Conclusion and Future Work}

We have given a logical characterization of the complexity class $\NPK$, i.\,e., non-deterministic polynomial time for $\BSSK$ machines, via a logic $\ESOK$, which is a version of existential second-order logic with semiring semantics. 
We have also shown that $\SATK$, the satisfiability problem of propositional logic with semiring semantics, is complete for $\NPK$.  Furthermore, we have shown that the true existential first-order theory of a semiring $K$ is a complete problem for the so-called Boolean part of $\NPK$, and an analogous result holds more generally for
$\BSSK$ machines that operate on subsets of $K$.

There are several promising opportunities for further research. In light of the counterparts of Fagin's and Cook's theorems presented in the semiring context, future research could focus on studying the P-versus-NP problem for specific semirings. 

The PCP theorem~\cite{DBLP:conf/stoc/BabaiFLS91,DBLP:journals/jacm/AroraLMSS98} establishes that every language in NP has probabilistically checkable proofs verifiable with logarithmic randomness and a constant number of queries, a cornerstone result linking proof verification to hardness of approximation. 
An intriguing open direction is whether an analogue could be formulated in more general algebraic frameworks, such as semirings. 
This raises the question of whether notions of proof checking, approximation resistance, or hardness results could meaningfully extend beyond fields into semiring-based complexity settings, and what structural barriers might prevent such a generalization.

It would also be interesting to study logics of this paper from the point of view of other logical formalisms that can accommodate semiring structures. 
One possible idea would be to relate the extensions of first-order logic considered in this work to metafinite structures and to study their finite model theoretic properties.
Another possible avenue is the contextualization of the logics defined in this paper with respect to logics over semirings, that do not have access to formula inequalities and/or constants, such as the first-order logic with semiring semantics introduced by Grädel and Tannen~\cite{abs-1712-01980, Grädel2025}.

\bibliography{ref_url,references}

\newpage
\appendix

\section{Proof of Theorem \ref{thm:cook}}\label{app:cook}
\cook*
\begin{proof}
    The proof follows the idea of the proof of Theorem \ref{thm:fagin}.    
    The goal is to build a formula $\varphi$ that describes the computation of a non-deterministic $\BSSK$ machine $M$ on some arbitrary input $x$ such that $\varphi$ is satisfiable if and only if $M$ accepts $x$.
    We use constants from $K$.
    Let $M$ be an arbitrary $\BSSK$ machine that runs in time $T(n)$ for input length $n$ and $\bar{x} = (x_1\dots x_n) \in K^n$ an arbitrary input string. 
    Let $(x_1', \dots, x_m') \in K^m$ be an arbitrary guess of length $m$, where $m$ is polynomially bounded by $n$.
    Similar to the relations in the proof of Fagin's theorem we define the following predicates: 
    \begin{itemize}
        \item $v_{t,p}$ describes the content of the tape at time $t$ and position $p$
        \item $q_{t,s} = \begin{cases}
            1, & \text{if $M$ is at node $s$ at time $t$}\\
            0, & \text{else}
        \end{cases}$
    \end{itemize} 
    The sets of different kinds of nodes of $M$ will be defined in the same way as in the proof of Theorem \ref{thm:fagin} and $i$ denotes the $i$th position.
    Notice that here the position is only used as an index of the predicate for the tape content.
    We use formulae $\varphi^t_v$ to describe the changes of the tape values at time $t$ and formulae $\varphi^t_q$ for the node changes.
    Boolean connectives are used in the same way as in the proof of Theorem~\ref{thm:fagin}.
    
    The formula $\varphi^0_v$ ensures that the input and the guess are encoded on the tape as per Definition \ref{def:nondeterministic_BSS_machine} and $\varphi^0_q$ ensures that the machine is only in the initial node $1$.
    We assume the input $\bar{x} = (x_1, \dots, x_n)$ is given as constants.
    \begin{alignat*}{2}
        \varphi^0_v \coloneqq & \\
        &\smashoperator[l]{\bigwedge_{1 \leq i \leq n}} &&(v_{0,i} = x_i \land v_{0,-i} = 1) \,\land \\
        &\smashoperator[l]{\bigwedge_{n+1 < i \leq T(n)}} &&\bigl((v_{0,-i} = 0 \bto (v_{0, -i - 1} = 0 \land v_{0,i} = 0)) \land (v_{0,-i} = 0 \lor v_{0,-i} = 1) \bigr)\, \land\\
        & &&v_{0,-(n+1)} = 0 \land v_{0,n+1} = 0 \land v_{0,0} = 0
    \end{alignat*}
    \[ \varphi^0_q \coloneqq (q_{0, 1} = 1) \land \bigwedge_{s \in S\setminus\{ 1\}} q_{0,s} = 0 \]
    For each computation step $t \bto t+1$ the formula $\varphi^{t+1}_v$ describes the changes in the value.
    Here we also use $\star_s \in \{-, +\}$ to denote either a right or left shift as in the proof of Theorem~\ref{thm:fagin}. 
    Similarly the positions $i_s, j_s, k_s$ are the ones associated with an arithmetic node $s$, as well as the constant $c_s$.
    \begin{alignat*}{2}
        \varphi^{t+1}_v \coloneqq &\\
        &\smashoperator[l]{\bigwedge_{s \in S_{\text{shift}}}} &&\left(q_{t,s} \bto \bigwedge_{-T(n) \leq p \leq T(n)} v_{t+1, p} = v_{t,p\star_s1}\right)  \land \\
        &\smashoperator[l]{\bigwedge_{s \in S_{\text{branch}}}} &&\left(q_{t,s} \bto \bigwedge_{-T(n) \leq p \leq T(n)} v_{t+1, p} = v_{t,p} \right)\, \land \\
        & \smashoperator[l]{\bigwedge_{s \in S_\text{add}}} &&q_{t,s} \bto (v_{t+1,i_s} = v_{t,j_s} + v_{t,k_s})\, \land\\
        & \smashoperator[l]{\bigwedge_{s \in S_\text{mult}}} &&q_{t,s} \bto (v_{t+1,i_s} = v_{t,j_s} \cdot v_{t,k_s})\, \land \\
        & \smashoperator[l]{\bigwedge_{s \in S_{\text{const}}}} &&q_{t,s} \bto (v_{t+1, i_s} = c_s)
    \end{alignat*}
    Correspondingly we use $\varphi^{t+1}_q$ to describe the changes in the node for one computation step.
    \begin{alignat*}{2}
        \varphi^{t+1}_q \coloneqq &\\
        & \smashoperator[l]{\bigwedge_{s \in S\setminus S_{\text{branch}}}} &&q_{t,s} \bto q_{t+1,\beta(s)}\, \land \\
        & \smashoperator[l]{\bigwedge_{s \in S_\text{branch}}} &&q_{t,s} \bto 
         \bigl((v_{t,1} > v_{t,2} \bto q_{t+1,\beta^+(s)}) \land (v_{t,1} \leq v_{t,2} \bto q_{t+1,\beta^-(s)}) \bigr)
    \end{alignat*}
    
    We require the tape to be of the form $\dots z10\textbf{.} x_1\dots$, where $z \neq 1$ and $x_1 \neq 0$ to be compliant with our definition of output readout of a $\BSSK$ machine. 
    Additionally it enforces that the node and the tape content don't change after the final node $N$ is reached.
    This yields the formula 
    \begin{alignat*}{1}
         \varphi_{\text{out}} \coloneqq& \\
         &\smashoperator[l]{\bigvee_{0 \leq t \leq T(n)}} \left( q_{t, N} \bto (q_{t+1, N} \land v_{t,1} \neq 0 \land v_{t,-1} = 1 \land v_{t,-2}\neq 1) \right) \land\\
         &\smashoperator[l]{\bigwedge_{-T(n) \leq p \leq T(n)}} v_{t,p} = v_{t+1,p}.
    \end{alignat*}
    
    As in the proof of Theorem \ref{thm:fagin} we need a formula $\varphi_\text{constraint}$ that enforces the constraints of our machine model.
    \begin{align*}
    \varphi_\text{constraint} &\coloneqq \bigwedge_{s \in (S_\text{add} \cup S_\text{mult} \cup S_\text{const})} p \neq i \bto (v_{t,p}=v_{t+1,p}) \land \bigwedge_{s \in S} q_{t,s} \bto \left( \bigwedge_{s \neq s' \in S} q_{t,s'} = 0 \right)
    \end{align*}
    The first part corresponds to checking that at most one cell changes for arithmetic nodes (i.\,e. the one at the nodes designated position $i_s$). 
    The second part checks that the machine stays in one node at a single time point. 
    This also implies that there is only one node change per step.

    The satisfiability of the conjunction \[\varphi = \varphi^0_v \land \varphi^0_q \land \bigwedge_{t \in [T(n)]} \left(\varphi^t_v \land \varphi^t_q \right) \land \varphi_\text{out} \land \varphi_\text{constraint}\] 
    now corresponds to the acceptance of our machine $M$.
    The correctness follows in a straightforward way from the construction.
\end{proof}

\section{Proofs for Section \ref{ETKand}}\label{app:etk}

The next lemma is presented in \Cref{ETKand}
in order to show that, for instance, in \Cref{def:npkx}
the assumption about the machine $M$ running in polynomial time (on all inputs in $K^*$)
is no more restrictive than saying that the machine $M$ runs in polynomial time on
inputs in the set $X^*$.
Therefore, we concentrate on main points of the proof of this lemma.

\timer*
\begin{proof}
Let $t$ be a polynomial function that bounds the runtime of the machine $M$.
Using the $K$-Turing machines of \cite{Jelia25}, it is enough to show that there is
a $K$-TM $M'$ such that the input-output function of $M'$ satisfies the condition
$f_{M'} \upharpoonright X^* = f$ and that the machine $M'$ runs in polynomial time for all inputs
in the set $X^*$.
Since $K$-Turing machines can use a finite number of states and a finite alphabet
of tape symbols in similar way to ordinary Turing machines,
the idea of the proof is to proceed in a manner that resembles the use of binary counters
in the context of Turing machines in classical complexity theory.

We give a description of the computation of the machine $M'$. Let $\bar{x} \in K^*$ be
given as the input. First, the symbols of the input are moved such that between each two cells of the tape
of the machine corresponding to this input string, there is a cell filled with the blank symbol $b$.
The machine $M'$ simulates the computation of the $\BSSK$ machine $M$ using only
every other cell of tape, corresponding to how the input is now located.

On the remaining every other cell of the tape, a binary counter is used based on
symbols $0$ and $1$. This is done in the following manner.
The counter is initialized to the value $t(n)$, encoded in binary, where $n$
is the length of the input $\bar{x}$. This can be done in polynomial time,
since the function $t$ is a fixed polynomial function.
After each simulated computation step of the machine $M$, the value of this counter is decremented by one.

If the simulation of the machine $M$ on input $\bar{x}$ halts before the counter reaches the value zero, the counter is discarded and the output of the simulation is arranged on the tape such that it is also the output of the machine $M'$.
However, if the counter hits zero before the simulation has stopped, then the machine $M'$ halts.

In this way, the simulation of the machine $M$ always halts after $t(n)$ computation steps, and for this, the simulating machine $M'$ uses only a polynomial number of computation steps.
Since the function $f$ is computed by $M$ in time $t$, it holds that
$f_{M'} \upharpoonright X^* = f$. On the other hand, the machine $M'$ always halts after simulating at most $t(n)$ computation steps, and in this way
the machine $M'$ halts on all inputs in $K^*$ in polynomial time.
\end{proof}

\nonarithmeticturing*
\begin{proof}
Let $M$ be a polynomial-time non-arithmetic $\BSSK$ machine that computes
the given reduction $Q \leqpmkxx P$.
Let $b \notin X$ be the blank symbol.
We give a description for a Turing machine $M'$ with input alphabet $X$ that runs in
polynomial time and computes the reduction from $Q$ to $P$.
The computation of the machine $M'$ proceeds through three phases.

In the first phase, the input $\bar{x} = (x_1, \dots, x_n) \in X^*$ of the machine $M'$ is converted into
the concatenation $1^n 0 \bar{x}$, that is, into the form
\[
(\underbrace{1, \dots, 1}_{n}, 0, x_1, \ldots, x_n).
\]
This string resembles the way
in which the input $\bar{x}$ is initialized for the $\BSSK$ machine $M$.
In order to produce the correct number of copies of the element $1$, the machine $M'$ may
go through the cells of the string $\bar{x}$, one element at a time,  and use a special
symbol $\hat{a}$ for each $a \in X$ to mark the current position in the string. These special
symbols are then reverted back to the corresponding elements of $X$.
The head of the machine $M'$ is moved to $x_1$ corresponding to the starting position of $M$.

In the second phase, the computation of the non-arithmetic machine $M$ on input $\bar{x}$ is simulated step by step.
Let $Q = \{1, \dots, N\}$ be the set of all computation nodes of the $\BSSK$ machine $M$
as in \Cref{def:BSSK}.
During the simulation, the machine $M'$ may base its computation on the set $Q$ of states
in accordance with the five types of computation nodes of the non-arithmetic machine $M$,
starting from the initial state $1$.
Since the tape is filled with the blank symbol $b$ instead of $0$,
whenever the symbol $b$ is read from the tape, it is treated as if it had the semiring value $0$.
\begin{itemize}
\item For the input node, the state is changed to the next computation node $\beta(1)$ of $M$.
\item Whenever an arithmetic computation node $m \in Q$ is reached, the machine $M'$ proceeds as follows.
If the computation node is a constant assignment for some $c \in X$, this symbol is written on
the tape at the head position of the machine. If the computation node corresponds to an arithmetic
operation of the form $x_i \leftarrow x_j \star x_k$ for some $\star \in \{+, \cdot\}$,
using auxiliary states of the machine,
the elements corresponding to $x_j$ and $x_k$ are retrieved and the result of the operation is written to the tape cell corresponding to $x_i$. Note that the indices $i$, $j$ and $k$ are fixed.
The result of the arithmetic operation is the value of the elements $x_j$ and $x_k$
which is not neutral element of $\star$,
or if both of them are equal to the neutral element, then this element is the result of the operation.
The state of the machine is changed to $\beta(m)$.
\item For a branching node $m \in Q$ of $M$, the possible results of the comparisons `$=$' or `$\leq$'
between any two elements of the finite set $X$ are encoded in auxiliary states of the Turing machine $M'$.
The state of the machine corresponding to the set $Q$ is changed
to $\beta^+(m)$ or $\beta^-(m)$ based on the result of the comparison.
\item For a shift node $m$, the tape is shifted accordingly
and the state is changed to $\beta(m)$.
\item Reaching the output node $N$ of $M$ concludes the simulation phase.
\end{itemize}

After the simulation, in the third and final phase of the computation of $M'$, the content of
the working tape of $M'$ is of the form
\[
    (\ldots, z, \underbrace{1, \dots, 1}_\ell, 0 \textbf{.} y_1, \ldots, y_\ell, \ldots )
\]
for some $z \neq 1$, where the machine head is at the marked position.
From this result, the string $(y_1, \dots, y_\ell)$ is extracted to be the output of $M'$.

When defined in this way, the Turing machine $M'$ runs in polynomial time and computes
the reduction $Q \leqpm P$ corresponding to the same reduction $Q \leqpmkxx P$ that is computed by
the $\BSSK$ machine $M$.
\end{proof}

\end{document}